\documentclass[sigconf]{acmart}
\pdfoutput=1
\settopmatter{printacmref=false} 
\renewcommand\footnotetextcopyrightpermission[1]{} 
\pagestyle{plain} 

\usepackage{natbib}
\usepackage{mathtools}
\usepackage{amsthm}

\newtheorem*{problem}{Problem Statement}

\newtheorem*{theorem*}{Theorem}






\begin{document}
\sloppy
\title{Timely-Throughput Optimal Coded Computing over Cloud Networks}

\author{Chien-Sheng Yang}
\affiliation{%
  \institution{University of Southern California}
}
\email{chienshy@usc.edu}

\author{Ramtin Pedarsani}
\affiliation{%
  \institution{University of California, Santa Barbara}
}
\email{ramtin@ece.ucsb.edu}

\author{A. Salman Avestimehr}
\affiliation{
  \institution{University of Southern California}
}
\email{avestimehr@ee.usc.edu}


\begin{abstract}
 In modern distributed computing systems, unpredictable and unreliable infrastructures result in high variability of computing resources. Meanwhile, there is significantly increasing demand for timely and event-driven services with deadline constraints.  
 %
 Motivated by measurements over Amazon EC2 clusters, we consider a two-state Markov model for variability of computing speed in cloud networks. In this model, each worker can be either in a good state or a bad state in terms of the computation speed, and the transition between these states is modeled as a Markov chain which is unknown to the scheduler. 
 We then consider a \emph{Coded Computing} framework, in which the data is possibly encoded and stored at the worker nodes in order to provide robustness against nodes that may be in a bad state.
 With timely computation requests submitted to the system with computation deadlines, our goal is to design the optimal computation-load allocation scheme and the optimal data encoding scheme that maximize the timely computation throughput (i.e, the average number of computation tasks that are accomplished before their deadline). 
 Our main result is the development of a dynamic computation strategy called \emph{Lagrange Estimate-and-Allocate (LEA)} strategy, which achieves the optimal timely computation throughput. 
 %
 %
 It is shown that compared to the static allocation strategy, LEA increases the timely computation throughput by
 $1.4 \times \sim 17.5 \times$ in various scenarios via simulations and by $1.27 \times \sim 6.5 \times$ in experiments over Amazon EC2 clusters.

\end{abstract}

%
%

\maketitle
\section{Introduction}\label{sec:intro}

Large-scale distributed computing systems can substantially suffer from unpredictable and unreliable computing infrastructure which can result in high variability of computing resources, i.e., speed of the computing resources vary over time. The speed variation has several causes including hardware failure, co-location of computation tasks, communication bottlenecks, etc. \cite{zaharia2008improving,ananthanarayanan2013effective} This variability is further amplified in computing clusters, such as Amazon EC2, due to the utilization of credit-based computing policy, in which the most commonly used T2 and T3 instances can operate significantly above a baseline level of CPU performance (approximately $10$ times faster as shown in Fig. \ref{fig:credit_speed}) by consuming CPU credits that are allocated periodically to the nodes. At the same time, there is a significant increase in utilizing the cloud  for event-driven and time-sensitive computations (e.g., IoT applications and cognitive services), in which the users increasingly demand timely services with deadline constraints, i.e., computations of requests have to be finished within specified deadlines.

%
Our goal in this paper is to study the problem of computation allocation over cloud networks with particular focus on variability of computing resources and timely computation tasks. From the measurements of nodes' computation speeds over Amazon EC2 clusters, shown in Fig. \ref{fig:credit_speed}, we observe that when a node is slow (fast), it is more likely that it continues to be slow (fast) in the following rounds of computation, which implies temporal correlation of computation speeds. Thus, to capture this phenomenon, we consider a two-state Markov model for variability of computing speed in cloud networks. In this model, each worker can be either in a good state or a bad state in terms of the computation speed, and the transition between these states is modeled as a Markov chain which is unknown to the scheduler.

Furthermore, we consider a \emph{Coded Computing} framework, in which the data is possibly encoded and stored at the worker nodes in order to provide robustness against nodes that may be in a bad state. The key idea of coded computing is to encode the data and design each worker's computation task such that the fastest responses of any $k$ workers out of total of $n$ workers suffice to complete the distributed computation, similar to classical coding theory where receiving any $k$ symbols out of $n$ transmitted symbols enables the receiver to decode the sent message.

We consider a dynamic computation model, where a sequence of functions needs to be computed over the (encoded) data that is distributedly stored at the nodes. More precisely, in an online manner, timely computation requests with given deadlines are submitted to the system, i.e., each computation has to be finished within the given deadline. Our goal is then to design the optimal computation-load allocation strategy and the optimal data encoding scheme that maximize the timely computation throughput (i.e, the average number of computation tasks that are accomplished before their deadline).\footnote{Our metric of timely computation throughput is motivated by timely throughput metric, introduced in~\cite{timelyThroughput}, which measures the average number of packets that are delivered by their deadline in a communication network.}

One significant challenge in this problem is the joint design of (1) a data encoding scheme to provide robustness against straggling workers; and (2) an adaptive computation load allocation strategy for the workers based on the history of previous computation times. In particular, due to the fact that the state of the computing nodes and the transition probabilities of the Markov model are unknown to the scheduler. We note that to find the optimal computation strategy, one has to solve a complex optimization which in general requires searching over all possible load allocations, even if the transition probabilities of Markov model are known to the master. Thus, it is not clear how one allocates the computation loads efficiently and what computation strategy is optimal, especially for the network with unknown Markov model.
 \begin{figure}[t]
  \centering
    \includegraphics[width = 0.9\columnwidth]{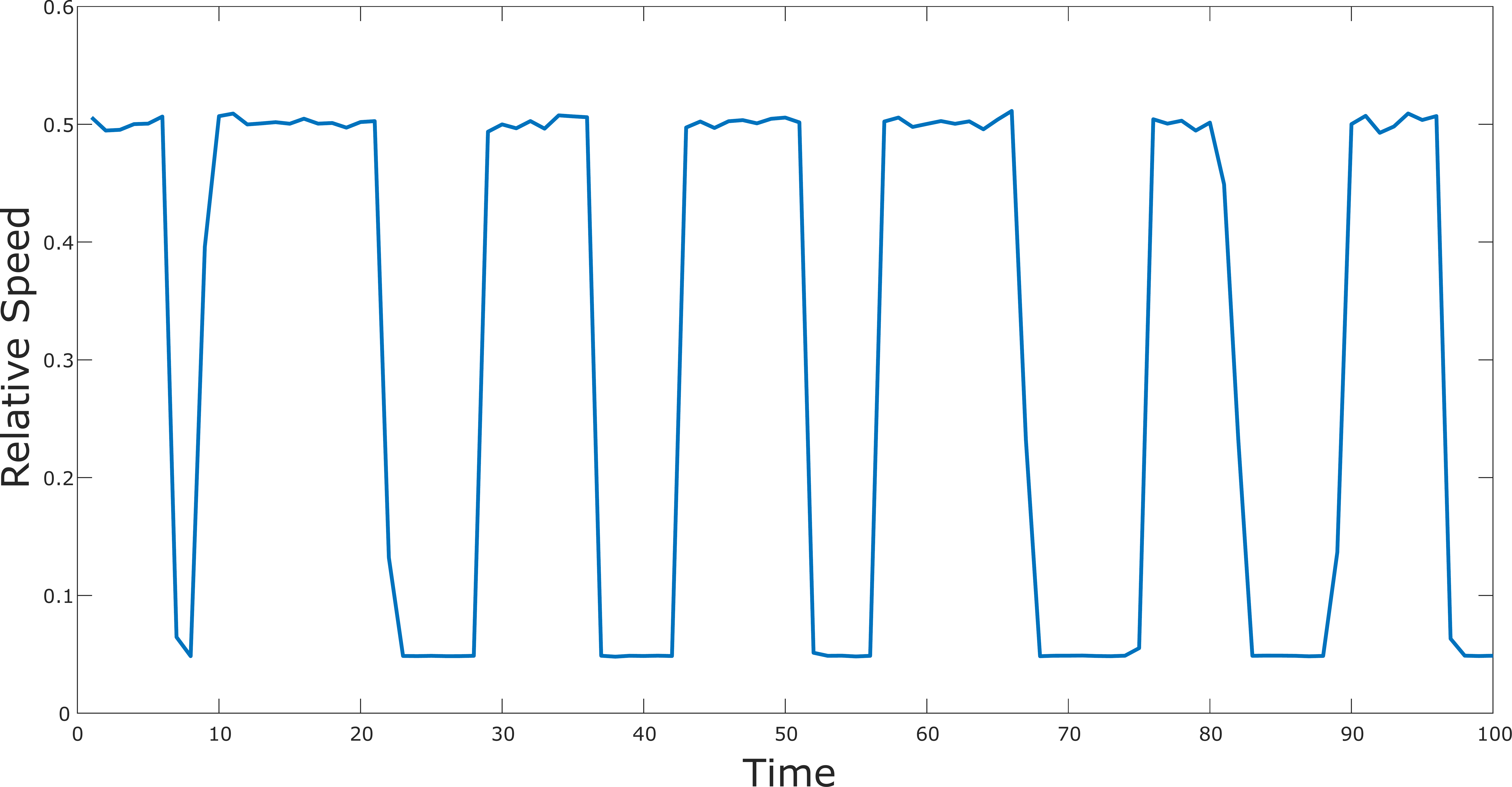}
\caption{Empirical measurement of speed variation of a credit-based \texttt{t2.micro} instance in Amazon EC2 in which we keep assigning computation (e.g., a matrix multiplication) to the instance and measure the finish times: A two-state Markov model.}
\label{fig:credit_speed}
\end{figure}

As the main contributions of the paper, we propose a dynamic computation strategy called \emph{Lagrange Estimate-and-Allocate (LEA)} strategy, and show that it achieves the optimal timely computation throughput. Utilizing Lagrange coding scheme for data encoding \cite{yu2019lagrange}, the LEA strategy estimates the transition probabilities by observing the past events at each time step, and then assigns computation loads based on the estimated probabilities. Moreover, we also show that finding the optimal load assignment using LEA can be done efficiently instead of searching over all possible load allocations which is computationally infeasible to implement.   

To prove the optimality of LEA strategy, we first focus on finding the optimal timely-throughput by maximizing the success probability of each round when the transition probabilities are known to master. For any fixed load assignment, we show that using Lagrange coding scheme proposed in \cite{yu2019lagrange} has the highest success probability of each round. Then, we show that the success probability using LEA converges to the optimal success probability. By the Strong Law of Large Numbers (SLLN), Ergodic theorem and a coupling argument, we finally prove that timely computation throughput achieved by the LEA strategy is equal to the optimal timely computation throughput, i.e., LEA is optimal. 

 In addition to proving the optimality of LEA, we carry out numerical studies and experiments over Amazon EC2 clusters. We compare the proposed LEA strategy with a static load allocation strategy for the benchmark. In our numerical analysis, compared to the static computation strategy, the LEA strategy increases the timely computation throughput by $1.38 \times \sim 17.5 \times$. In experiments over Amazon EC2 clusters, the LEA strategy increases the timely computation throughput by $1.27 \times \sim 6.5 \times$.
\subsection{Related Prior Work}
We divide the literature review to two main lines of work: scheduling and load balancing over cloud networks, and coded computing in distributed systems. 
 
\textbf{Task Scheduling:} Task scheduling problem has been widely studied in the literature, which can be divided into two main categories: static scheduling and dynamic scheduling. In the static or offline scheduling problem, jobs are present at the beginning, and the goal is to allocate tasks to servers such that a performance metric such as average computation delay is minimized. In most cases, the static scheduling problem is computationally hard, and various heuristics, approximation and stochastic approaches are proposed (see e.g. \cite{kwok1999static,zheng2013stochastic,topcuoglu2002performance}). 

In the dynamic or online scheduling problem, jobs arrive to the network according to a stochastic process, and get scheduled dynamically over time. In many works in the literature, the tasks have dedicated servers for processing, and the goal is to establish stability conditions for the network \cite{baccelli1989acyclic}. Given the stability results, the next natural goal is to compute the expected completion times of jobs or delay distributions. However, few analytical results are available for characterizing the delay performance, except for the simplest models. When the tasks do not have dedicated servers, one aims to find a throughput-optimal scheduling policy (see e.g. \cite{eryilmaz2005stable}), i.e. a policy that stabilizes the network, whenever it can be stabilized. For example, Max-Weight scheduling, proposed in \cite{tassiulas1992stability,dai2005maximum}, is known to be throughput-optimal for wireless networks, flexible queueing networks \cite{neely2005dynamic,eryilmaz2007fair,pedarsani2017robust}, data centers networks \cite{maguluri2012stochastic} and dispersed computing networks \cite{Yang1806:Communication}. Moreover, there have been many works which focus on task scheduling problem with deadline constraints over cloud networks (see e.g. \cite{arabnejad2017scheduling,hoseinnejhad2017deadline}). 

\textbf{Coded Computing:} Coded computing is a recently developed area that proposes to inject clever redundancy in the form of ``coded'' data to tackle two major bottlenecks in distributed computing: straggling servers and communication bandwidth \cite{lee2018speeding,li2018fundamental}.  There have been many works that following this line including those that alleviate stragglers (e.g., \cite{dutta2016short,tandon2017gradient,reisizadeh2017coded,yu2017polynomial}), and those that tackle communication bandwidth (e.g., \cite{li2017coding,prakash2018coded}). More recently coded computing has also been utilized to address security and privacy challenges in distributed computing (e.g., \cite{chen2018draco,bitar2017minimizing,yu2019lagrange}). 

 So far, research in coded computing has focused on developing frameworks for one round of computation instead of considering network dynamics for analyzing long-run performance of distributed computing systems. In this paper, considering the dynamics of the network, we make substantial progress by combining the ideas of coded computing with dynamic computation load allocation over cloud networks, and developing Lagrange Estimate-and Allocate strategy that can adaptively assign computation loads to workers and essentially learn the unknown network dynamics. Furthermore, we consider the metric "timely computation throughput" which denotes the average number of successful completions instead of the metric "timely throughput" which usually denotes the average number of packets delivered successfully in network scenarios (see e.g., \cite{lashgari2013timely}).  

\section{System Model}\label{sec:sys}
\begin{figure}[t]
    \centering
    \includegraphics[width =\columnwidth]{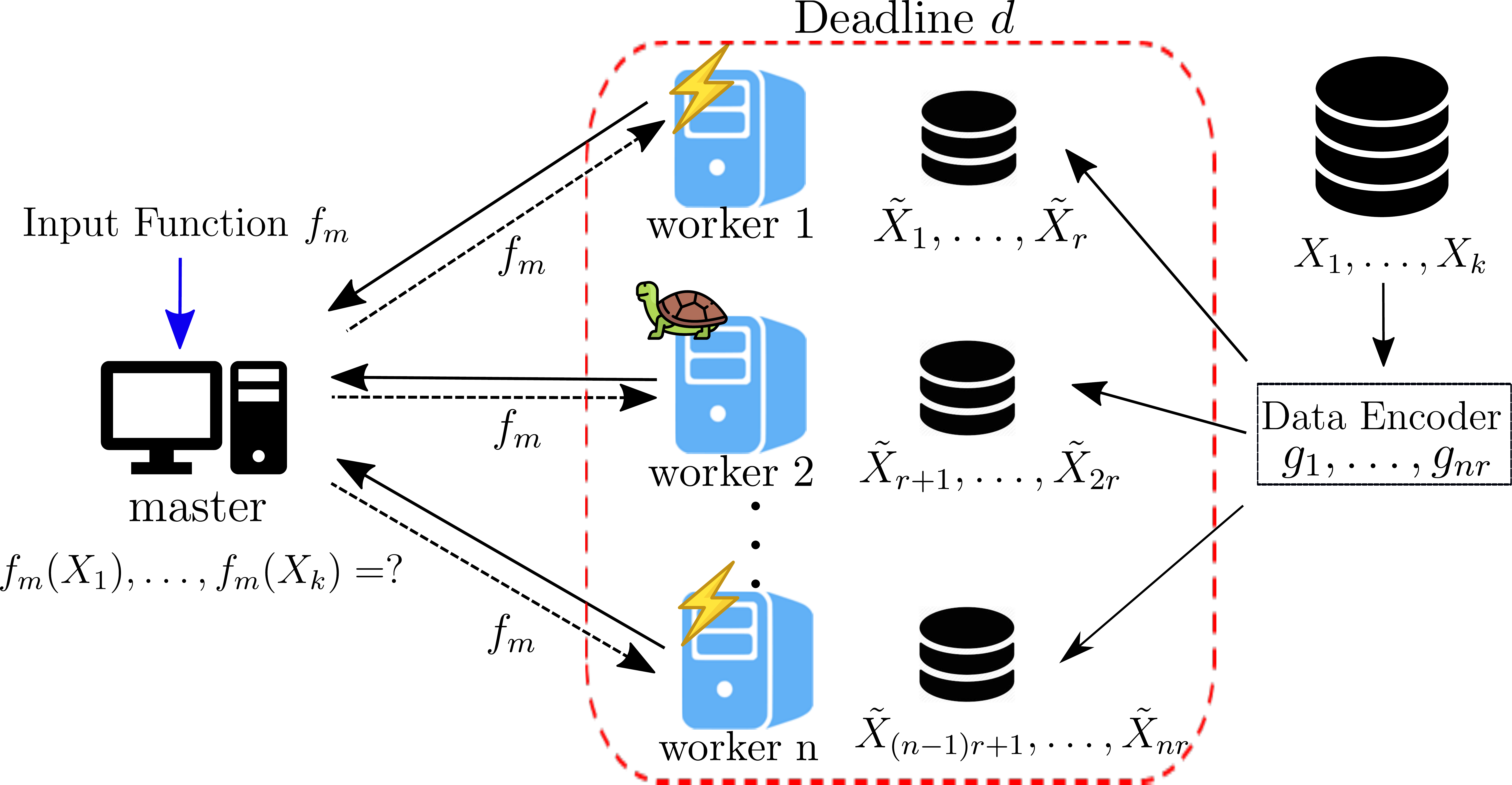}
    \caption{Overview of dynamic load allocation over a coded computing framework with timely computation requests. In each round $m$, the goal is to compute the evaluations $f(X_1),\dots,f(X_k)$ by the deadline $d$ using $n$ workers.
    } 
    \label{fig:system}
\end{figure}
\subsection{Computation Model}\label{label:comp_model}
We consider a distributed computing problem, in which computation requests are submitted to a distributed computing system in an online manner, and the computation is carried out in the system. In particular, there is a fixed deadline for each computation round, i.e., each computation has to be finished within the given deadline.

As shown in Fig. \ref{fig:system}, the considered system is composed of a master node and $n$ worker nodes. There is also a dataset $X$ which is divided to $X_1,X_2,\dots,X_k$. Specifically, each $X_j$ is an element in a vector space $\mathbb{V}$ over a field $\mathbb{F}$. In each round $m$ (or time slot in a discrete-time system), a computation request with a function $f_m$ is submitted to the system, where the function $f_m$ is an arbitrary multivariate polynomial with vector coefficients having degree \textrm{deg}$(f)$. We denote by $d$ the deadline of each computation request which is smaller than or equal to the duration of each round. In such distributed computing system, we are interested in computing the evaluations $f_m(X_1),f_m(X_2),\dots,f_m(X_k)$ in each round $m$ by the deadline $d$. 

Prior to the computation, the master first encodes the dataset $X_1,X_2,\dots,X_k$ to $\tilde{X}_1$, $\tilde{X}_2$, $\dots$, $\tilde{X}_{nr}$ via a set of $nr$ encoding functions $\vec{g} = (g_1,g_2,$ $\dots,g_{nr})$, where encoded data $\tilde{X}_v \triangleq g_v(X_1,\dots,X_k)$ is determined by the encoding function $g_v:\mathbb{V} \rightarrow \mathbb{U}$. Each worker $i$ stores $r$ encoded data chunks $\tilde{X}_{(i-1)r+1}$, $\tilde{X}_{(i-1)r+2}$ ,$\dots,\tilde{X}_{ir}$ locally. In each round $m$, each worker evaluates certain subset of $f_m(\tilde{X}_{(i-1)r+1})$, $f_m(\tilde{X}_{(i-1)r+2})$, $\dots$, $f_m(\tilde{X}_{ir})$ determined by the master.

Given a function $f_m$ in round $m$, the master assigns the computations to each worker. More specifically, we define $\vec{\ell}_m=(\ell_{m,1},\ell_{m,2},\dots,\ell_{m,n})$ to be the load allocation vector, in which $\ell_{m,i}$ denotes the number of polynomial or function evaluations computed by worker $i$ in round $m$. Each worker $i$ computes $\ell_{m,i}$ evaluations of function $f_m$ over the stored data without specified order, and returns all the results back to the master upon the completion of all assigned computations. The master node aggregates the results from the worker nodes until it receives a \emph{decodable} set of computations and recovers $f_m(X_1),f_m(X_2),\dots,f_m(X_k)$. We say a set of computations is decodable if the evaluations $f_m(X_1)$, $f_m(X_2)$, $\dots$, $f_m(X_k)$ can be obtained by computing decoding functions over received results. In each round, the goal of the master is to receive a decodable set of computations within the given deadline $d$.

Let us illustrate the model through a simple example.

\textbf{Example.} In each round $m$, we consider a problem of evaluating a linear function $f_m(X_j) = X_j\vec{w}_m$ over $n = 3$ workers, where the input dataset $X$ is divided to $X_1$, $X_2$ and $\vec{w}_m$ is the input vector. One possible coding scheme is to encode $X_1$ and $X_2$ to $\tilde{X}_1 = X_1$, $\tilde{X}_2 = X_2$ and $\tilde{X}_3 = X_1 + X_2$. Each worker $i$ stores $r=1$ encoded data chunk $\tilde{X}_i$. If the load allocation vector $\vec{\ell}_m = (1,1,1)$ is used by the master, then each worker $i$ computes $\tilde{X}_i\vec{w}_m$ and sends the result back the master upon its completion. The set $\{\tilde{X}_1\vec{w}_m, \tilde{X}_3\vec{w}_m\}$ is one of decodable sets since the master can obtain $X_1\vec{w}_m$ and $X_2\vec{w}_m$ by computing $X_1\vec{w}_m = \tilde{X}_1\vec{w}_m$ and $X_2\vec{w}_m = \tilde{X}_3\vec{w}_m-\tilde{X}_1\vec{w}_m$. 

We note that the considered computation model naturally appears in many gradient computing problems. For example, in linear regression problems, we want to compute $f_m(X_j)$ $=X_j^{\top}(X_j\vec{w}_m-\vec{y})$ which is the gradient of the quadratic loss function $\frac{1}{2}(X_j^{\top}\vec{w}_m-\vec{y})^2$ with respect to the weight vector $\vec{w}_m$ in round $m$. 

\subsection{Network Model}\label{label:net_model}
Motivated by the measurements over Amazon EC2 clusters, shown in Fig. \ref{fig:credit_speed}, we assume that each worker has two different states for computing, \textit{good state} and \textit{bad state}. We denote $\mu_g$ as the computing speed (evaluations per second) in the good state, and denote $\mu_b$ as the computing speed in the bad state. We assume that the computing speeds $\mu_g$ and $\mu_b$ are known to the master. Note that given a worker's state, its computation time (per evaluation) is deterministic. We denote $\mu_{m,i}$ as computing speed of worker $i$ in round $m$. And, we denote $\vec{\mu}_m= (\mu_{m,1},\mu_{m,2},\dots,\mu_{m,n})$ as computing speed vector in round $m$. For each worker $i$, we model the state transitions as a stationary Markov process $S_{i}[1],S_{i}[2],\dots,$. The transition matrix for worker $i$ is defined as follows:
\begin{align}
    P_i =
    \begin{bmatrix}
    p_{g \rightarrow g,i} & 1 - p_{g \rightarrow g,i}\\
    1 - p_{b \rightarrow b,i} & p_{b \rightarrow b,i}
    \end{bmatrix}
\end{align}
where $p_{g \rightarrow g,i}$ is the transition probability of worker $i$ going to the good state from the good state, and $p_{b \rightarrow b,i}$ is the transition probability of worker $i$ going to the bad state from the bad state. We assume that the Markov processes of different workers are mutually independent. Prior to the computation, we assume the initial state of worker $i$ is given by the stationary distribution of Markov chain ($S_i[1],S_i[2],\dots$). We assume that the transition probabilities and current state of each worker are unknown to the master before the master assigns the computations to each worker. 
\subsection{Problem Formulation}
Given the computation deadline $d$, we denote $N_m(d)$ as an indicator representing whether the computation is finished by deadline $d$, i.e., $N_m(d)=1$ if the computation is finished by time $d$ in round $m$, and $N_m(d)=0$ otherwise. We denote $\eta =(\vec{g},\{\vec{\ell}_m\}^{\infty}_{m=1})$ as the computation strategy. Also, we denote the set of all computation strategies as $\Gamma$.
 \begin{definition}[Timely Computation Throughput]
Given the computation deadline $d$, using computation strategy $\eta$, the timely computation throughput, denoted by $R(d,\eta)$, is defined as follows: 
\begin{align}
R(d,\eta) = \lim_{M \rightarrow \infty} \frac{\sum^M_{m=1}N_m(d)}{M}.
\end{align}
\end{definition}
Based on the above definitions, our problem is now formulated as the following.
\begin{problem}
Consider a distributed computing system consisting
of computation and network models as defined in Subsections \ref{label:comp_model} and \ref{label:net_model}. Our goal is to find an optimal computation strategy achieving optimal timely computation throughput, denoted by $R^{*}(d)$ which is defined as follows:
\begin{align}
    R^{*}(d) = \sup_{\eta \in \Gamma}R(d,\eta)
\end{align}
\end{problem}

\section{Lagrange Estimate-and-Allocate (LEA) Strategy}\label{sec:EA}
In this section, we propose a dynamic computation strategy called \emph{Lagrange Estimate-and-Allocate (LEA)} strategy, which is composed of Lagrange coding scheme for data encoding and \emph{Estimate-and-Allocate (EA)} algorithm for allocating loads to the workers adaptively by observing the history of computation times. In each round, the EA algorithm first assigns computation loads by maximizing the estimated success probability based on the estimated transition probabilities of the underlying Markov chain (and based on that the previous state of the workers). After receiving the results, the EA algorithm updates the estimated transition probabilities by observing the computation times in the past events.
\subsection{Data Encoding in LEA} \label{subsec:LCC}
For data encoding, we leverage a linear coding scheme called Lagrange coding scheme which is proposed in \cite{yu2019lagrange}. We start with an illustrative example.

We first consider the scenario where $nr \geq k \ \textrm{deg} (f)-1$. In each round $m$, we consider a problem of evaluating a quadratic function $f_m(X_j)$ $=X_j^{\top}X_j\vec{w}_m$ (\textrm{deg}$(f)$=2) over $n=3$ workers, where the input dataset $X$ is divided to $X_1,X_2$. Each worker stores $r=2$ encoded data chunks ($nr=6 > k \ \textrm{deg} (f)-1 =3$). We define $u$ as follows:
\begin{align}
    u(z) \triangleq X_1\frac{z-1}{0-1}+X_2 \frac{z-0}{1-0} = z(X_2-X_1)+X_1,
\end{align}
in which $u(0) = X_1$ and $u(1)=X_2$. Then, we encode $X_1$ and $X_2$ to $\tilde{X}_i = u(i-1)$, i.e., $\tilde{X}_1=X_1$, $\tilde{X}_2=X_2$, $\tilde{X}_3=-X_1+2X_2$, $\tilde{X}_4=-2X_1+3X_2$, $\tilde{X}_5=-3X_1+4X_2$ and $\tilde{X}_6=-4X_1+5X_2$. Each worker $i$ stores $\tilde{X}_{2i-1}$ and $\tilde{X}_{2i}$ locally. 

We now consider the scenario where $nr < k \ \textrm{deg} (f)-1$. We consider the same problem in the previous scenario, but the there is larger input dataset $X$ which is divided to $X_1,X_2,X_3$ and $X_4$ ($nr=6 < k \ \textrm{deg} (f)-1 =7$). We encode $X_1$ and $X_2$ using a repetition coding design such that $\tilde{X}_1=X_1$, $\tilde{X}_2=X_2$, $\tilde{X}_3=X_3$, $\tilde{X}_4=X_4$, $\tilde{X}_5=X_1$ and $\tilde{X}_6=X_2$. Each worker $i$ stores $\tilde{X}_{2i-1}$ and $\tilde{X}_{2i}$ locally. 

Formally, we describe Lagrange coding scheme as follows:\\
(1) $nr \geq k \ \textrm{deg} (f)-1$: We first select $k$ distinct elements $\beta_1, \beta_2,\dots, \beta_k$ from $\mathbb{F}$, and let $u$ be the respective \emph{Lagrange interpolation polynomial} 
\begin{align}
    u(z) \triangleq \sum^{k}_{j=1}X_j \prod_{l \in [k]\backslash \{j\}}\frac{z-\beta_l}{\beta_j - \beta_l}.
\end{align}
where $u: \mathbb{F} \rightarrow \mathbb{V}$ is a polynomial of degree $k-1$ such that $u(\beta_j) = X_j$. To encode the input $X_1,X_2,\dots,X_k$, we select $nr$ distinct elements $\alpha_1,\alpha_2,\dots,\alpha_{nr}$ from $\mathbb{F}$, and encode $X_1,X_2,\dots,X_k$ to $\tilde{X}_v=u(\alpha_v)$ for all $v \in [nr]$, i.e.,
\begin{align}
    \tilde{X}_v = g_v(X) = u(\alpha_v) \triangleq  \sum^{k}_{j=1}X_j \prod_{l \in [k]\backslash \{j\}}\frac{\alpha_v-\beta_l}{\beta_j - \beta_l}.
\end{align}
Each worker $i$ stores $\tilde{X}_{(i-1)r+1},\tilde{X}_{(i-1)r+2},\dots,\tilde{X}_{ir}$ locally. \\
(2) $nr < k \ \textrm{deg} (f)-1$: We use a repetition coding design to encode the input $X_1,X_2,\dots,X_k$. We replicate every $X_i$ either $\lfloor \frac{nr}{k} \rfloor$ or $\lceil \frac{nr}{k} \rceil$ times such that the number of total encoded data chunks is $nr$. Then, we obtain the encoded data $\tilde{X}_{1},\tilde{X}_{2},\dots,\tilde{X}_{nr}$. Each worker picks $r$ of the encoded data $\tilde{X}_{1},\tilde{X}_{2},\dots,\tilde{X}_{nr}$ to be stored locally.

Note that decoding and encoding in Lagrange coding scheme relies on polynomial interpolation and evaluation which can be done efficiently.
\subsection{Load Allocation in LEA}
Before introducing the EA algorithm, we first define the following terms. For each worker $i$, we denote $C_{g\rightarrow g,i}(m)$ as the number of times that event "good state to good state" happened up to round $m$, $C_{g\rightarrow b,i}(m)$ as the number of times that event "good state to bad state" happened up to round $m$, $C_{b\rightarrow g,i}(m)$ as the number of times that event "bad state to good state" happened up to round $m$ and $C_{b\rightarrow b,i}(m)$ as the number of times that event "bad state to bad state" happened up to round $m$. 

For worker $i$, we denote $\hat{p}_{g \rightarrow g,i}(m)$ and $\hat{p}_{b \rightarrow b,i}(m)$ as the estimated transition probabilities after the first $m-1$ rounds of computations. For worker $i$, we denote $\hat{p}_{g,i}(m)$ and $\hat{p}_{b,i}(m)$ as the estimated probabilities being in the good state and the bad state in round $m$ respectively. Without loss of generality, we assume that $\hat{p}_{g,1}(m) \geq \hat{p}_{g,2}(m) \geq \dots \geq \hat{p}_{g,n}(m)$. We also define $\ell_b \triangleq \mu_bd$ and $\ell_g \triangleq \min(\mu_gd,r)$

Now, we formally describe the EA algorithm. In each round $m$, the EA algorithm has the following $4$ phases:

\textbf{(1) Load Assignment Phase}:
The master maximizes the estimated success probability in round $m$ based on the the estimated probabilities $\hat{p}_{g,i}(m)$ and $\hat{p}_{b,i}(m)$. To do so, the master finds $i^*_m$ $(1 \leq i^*_m \leq n)$ maximizing the estimated success probability function defined as follows\footnote{Note that we only consider the case: $K^* \geq n\mu_bd =n\ell_b$, otherwise the computation can be always finished in time $d$ which is trivial.}:
\begin{align} \label{eq:est_success1}
    \hat{\mathbb{P}}_m(\tilde{i}) =  
    0  \  \text{if} \ K^* > \tilde{i}\ell_g +(n-\tilde{i})\ell_b,
\end{align}
otherwise
\begin{align} \label{eq:est_success2}
  \hat{\mathbb{P}}_m(\tilde{i}) =    \sum^{\tilde{i}}_{l=w(\tilde{i})}\sum_{\mathcal{G}:\mathcal{G}\subseteq [\tilde{i}],|\mathcal{G}|=l} \prod_{i \in \mathcal{G}} \hat{p}_{g,i}(m) \prod_{i \in [\tilde{i}] \backslash \mathcal{G}} \hat{p}_{b,i}(m)
\end{align}
where $w(\tilde{i}) \triangleq \lceil \frac{K^* - (n-\tilde{i})\ell_b}{\ell_g} \rceil$ and $K^*$ is defined as follows:
\begin{align}\label{eq:recover_threshold}
    K^{*} = 
    \begin{cases}
        (k-1) \textrm{deg}(f) + 1  \ & \text{if} \ nr \geq k \ \textrm{deg} (f)-1 \\
        nr-\lfloor\frac{nr}{k}\rfloor+1 \ & \text{otherwise}.
    \end{cases}
\end{align}
Note that equations (\ref{eq:est_success1}) and (\ref{eq:est_success2}) define the estimated success probability which is the function of $\tilde{i}$ (number of workers assigned to compute $\ell_g$ evaluations). The intuition behind equation (\ref{eq:est_success1}) is that if total load assigned to all the workers is smaller than the optimal recovery threshold, the probability of success is zero. Based on the estimated probabilities $\hat{p}_g$ and $\hat{p}_b$, equation (\ref{eq:est_success2}) gives us the estimated success probability by summing the probabilities of events which have enough workers in good state leading to successful completion of the computation before the deadline. Also, $K^*$ defined in (\ref{eq:recover_threshold}) is the optimal recovery threshold using Lagrange coding scheme \cite{yu2019lagrange} which guarantees that the evaluations can be recovered when the master receives any $K^*$ results from the workers. Thus, $i^*_m = \arg\max \hat{\mathbb{P}}_m(\tilde{i})$. Then, the master does assignment by using the load allocation vector $\ell_{m}$ such that
\begin{align}
\ell_{m,i} =
\begin{cases} 
    \ell_g, \ \text{if} \ 1 \leq i \leq i^*_m\\
    \ell_b, \ \text{otherwise}.
\end{cases}
\end{align}
In load assignment phase, the idea is to select workers in the order of the estimated probability being in the good state, and assign more loads accordingly. Note that it is just a linear search in load assignment phase which is computationally efficient.

\textbf{(2) Local Computation Phase:} Within each round $m$ of computation, each worker $i$ receives function $f_m$ and load assignment $\ell_{m,i}$ from the master. Then, each worker $i$ computes evaluations of function $f_m$ over encoded data $\tilde{X}_{(i-1)r+1},\tilde{X}_{(i-1)r+2},\dots,\tilde{X}_{(i-1)r+\ell_{m,i}}$, i.e., $f_m(\tilde{X}_{(i-1)r+1}),f_m(\tilde{X}_{(i-1)r+2}),\dots,f_m(\tilde{X}_{(i-1)r+\ell_{m,i}})$. After the computation, each worker sends all the computation results back to the master upon its completion. 

\textbf{(3) Aggregation and Observation Phase}: Having received the fastest $K^*$ computation results from the workers, the master recovers the evaluations $f_m(X_1), f_m(X_2), \dots, f_m(X_k)$ for the request function $f_m$. By observing whether the results are sent back or not, the master checks which one of events "good state to good state", "good state to bad state", "bad state to good state" and "bad state to bad state" has happened in round $m$ for each worker $i$. Then, the master obtains $C_{g \rightarrow g,i}(m)$, $C_{g \rightarrow b,i}(m)$, $C_{b \rightarrow g,i}(m)$ and $C_{b \rightarrow b,i}(m)$. Note that the time that it takes for one worker's result to be completed and sent back to the master actually indicates the (previous) state of that worker, since the speeds are deterministic and the computation time in a good state is less than the computation time in a bad state.

\textbf{(4) Update Phase}: After aggregation and observation phase, the master updates the estimated transition probabilities $\hat{p}_{g \rightarrow g,i}(m+1)$ and $\hat{p}_{b \rightarrow b,i}(m+1)$ for the round $m+1$: $\hat{p}_{g \rightarrow g,i}(m+1) = \frac{C_{g\rightarrow g,i}(m)}{C_{g\rightarrow g,i}(m)+ C_{g\rightarrow b,i}(m)}$ and $\hat{p}_{b \rightarrow b,i}(m+1) = \frac{C_{b\rightarrow b,i}(m)}{C_{b\rightarrow g,i}(m)+ C_{b\rightarrow b,i}(m)}$.
The master updates the estimated probabilities $\hat{p}_{g,i}(m+1)$ and $\hat{p}_{b,i}(m+1)$. If worker $i$ was in good state in round $m$, $\hat{p}_{g,i}(m+1) = \hat{p}_{g \rightarrow g,i}(m+1)$, and  $ \hat{p}_{g,i}(m+1) =1- \hat{p}_{b \rightarrow b,i}(m+1)$ otherwise. Then, the computation goes to the round $m+1$.
\section{Upper bound on the timely computation throughput} \label{sec:timelythroughput}
In this section, we give an upper bound for the timely computation throughput. The idea is to consider the case that the Markov model of the network is known to the master and achieve the optimal computation throughput for this case.



\subsection{Optimal Success Probability of One Round Computation}\label{subsec:success_prob}
First, we consider one round of computation using a load allocation vector $\vec{\ell}$ with a linear coding scheme $\vec{g}$. Without knowing computing speed vector $\vec{\mu}$, we denote $T^{(\vec{\ell},\vec{g})}(\vec{\mu})$ as the random variable of finish time using $\vec{\ell}$ and $\vec{g}$. We define the success probability as the probability that the computation is finished in time $d$, i.e., $\mathbb{P}(T^{(\vec{\ell},\vec{g})} \leq d)$ according to the distribution of $\vec{\mu}$. 

For a coding scheme, we define \emph{recovery threshold} which is formally stated as follows:
 \begin{definition}[Recovery Threshold]
 For an integer $k$, a coding scheme $\vec{g}$ is \emph{k-recoverable} if the master can recover the required function evaluations from any $k$ of $nr$ local computation results. We define the \emph{recovery threshold} of a coding scheme $\vec{g}$, denoted by $K(\vec{g})$, as the minimum number of $k$ such that the coding scheme $\vec{g}$ is $k$-recoverable.
 \end{definition}
Given a coding scheme $\vec{g}$, we have the recovery threshold $K(\vec{g})$ which is the minimum number of evaluations to be received in total from the workers. Thus, we aim at finding a coding scheme and a load allocation vector that maximizes the success probability by solving the following optimization problem:
\begin{align}
    \text{Maximize} \ &\mathbb{P}(T^{(\vec{\ell},\vec{g})} \leq d) \\
    \text{subject to} \ & \sum^n_{i=1} \ell_{i} \geq K(\vec{g}), \\
    & 0 \leq \ell_{i} \leq r, \ \ell_i \in \mathbb{Z}, \forall 1 \leq i \leq n.
\end{align}
In the following, we show that the Lagrange coding scheme achieves the highest success probability for any fixed load allocation vector.
Before proving the optimality of Lagrange coding scheme in terms of success probability, we first define \emph{optimal recovery threshold} as follows:
\begin{definition}
 We define the optimal recovery threshold, denoted by $K^*$, as the minimum achievable recovery threshold. Specifically, \begin{align}
     K^* \triangleq \min_{\vec{g}} K(\vec{g}) .
 \end{align}
 \end{definition}
By \cite{yu2019lagrange}, Lagrange coding scheme achieves optimal recovery threshold of evaluating a multivariate polynomial function $f$ (total degree \textrm{deg}$(f)$) on a dataset of $k$ inputs, which is given by
\begin{align}
    K^{*} = (k-1) \textrm{deg}(f) + 1 \label{eq:recovery1}
\end{align}
when $nr \geq k \ \textrm{deg} (f)-1$, and 
\begin{align}
    K^{*} = nr-\lfloor\frac{nr}{k}\rfloor+1 \label{eq:recovery2}
\end{align}
otherwise. 

 We now show that Lagrange coding scheme achieves the highest success probability for any fixed load allocation vector. It is intuitive that a coding scheme achieving  smaller recovery threshold should have higher success probability. We formally state this claim in the following lemma.
\begin{lemma}(Monotonicity) 
\label{lemma:mon}
Consider an arbitrary load allocation vector $\vec{\ell}$, for any coding schemes $\vec{g_1}$ and $\vec{g_2}$, such that $K(\vec{g}_1) \leq K(\vec{g}_2)$, we have
\begin{align}
    \mathbb{P}(T^{(\vec{\ell},\vec{g}_1)} \leq d) \geq \mathbb{P}(T^{(\vec{\ell},\vec{g}_2)} \leq d).
\end{align}
\end{lemma}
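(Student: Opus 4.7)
The plan is to prove the lemma by a straightforward coupling/pointwise comparison, exploiting the fact that the coding scheme influences the finish time only through its recovery threshold. Concretely, I will fix a realization of the speed vector $\vec{\mu}$, couple both systems to use this same realization, and argue that $T^{(\vec{\ell},\vec{g}_1)}(\vec{\mu}) \leq T^{(\vec{\ell},\vec{g}_2)}(\vec{\mu})$ pointwise; the lemma then follows by monotonicity of probability.

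The first step is to write the finish time as an explicit function of $\vec{\mu}$ and $K(\vec{g})$. Fix any realization of $\vec{\mu}$. Since the load allocation vector $\vec{\ell}$ is the same under both schemes and each worker's per-evaluation speed is deterministic given its state, the time $t_i(\vec{\mu})$ at which worker $i$ completes and sends back all of its $\ell_i$ evaluations depends only on $\vec{\mu}$ and $\vec{\ell}$, not on $\vec{g}$. The master accumulates results in batches of size $\ell_i$ at the instants $t_i(\vec{\mu})$. By definition of recovery threshold, a $K(\vec{g})$-recoverable scheme allows decoding from \emph{any} $K(\vec{g})$ of the $nr$ local results, so the master can recover $f(X_1),\ldots,f(X_k)$ at the first time $t$ such that the total number of received evaluations, $\sum_{i:\,t_i(\vec{\mu}) \leq t} \ell_i$, is at least $K(\vec{g})$. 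Hence
\begin{equation*}
T^{(\vec{\ell},\vec{g})}(\vec{\mu}) \;=\; \min\Bigl\{\, t \,:\, \sum_{i:\,t_i(\vec{\mu}) \leq t} \ell_i \;\geq\; K(\vec{g}) \Bigr\}.
\end{equation*}

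The second step is the pointwise comparison. The right-hand side above is nondecreasing in $K(\vec{g})$ for every fixed $\vec{\mu}$, because raising the required number of accumulated results can only postpone (or leave unchanged) the first time the threshold is met. Therefore, since $K(\vec{g}_1)\leq K(\vec{g}_2)$,
\begin{equation*}
T^{(\vec{\ell},\vec{g}_1)}(\vec{\mu}) \;\leq\; T^{(\vec{\ell},\vec{g}_2)}(\vec{\mu}) \qquad \text{for every realization } \vec{\mu}.
\end{equation*}
Consequently the event $\{T^{(\vec{\ell},\vec{g}_2)} \leq d\}$ is contained in $\{T^{(\vec{\ell},\vec{g}_1)} \leq d\}$ under this coupling, and taking probabilities yields the claimed inequality.

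I do not anticipate a real obstacle here; the only subtlety worth flagging is that the ``any $k$ out of $nr$'' phrasing in the definition of recovery threshold is essential, because the master receives results grouped by worker rather than as an arbitrary subset it may choose. The ``any $k$'' property guarantees that whichever particular $K(\vec{g})$-sized prefix of the sorted arrival sequence occurs on a given sample path is still decodable, which is precisely what legitimizes the formula for $T^{(\vec{\ell},\vec{g})}(\vec{\mu})$ above and hence the pointwise comparison.
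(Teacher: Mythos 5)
Your proof is correct and follows essentially the same route as the paper: both fix the realization of $\vec{\mu}$, observe that the number of evaluations received by any time depends only on $\vec{\ell}$ and $\vec{\mu}$ (not on the coding scheme), and conclude via the ordering of the recovery thresholds that the success event for $\vec{g}_2$ is contained in that for $\vec{g}_1$. The only cosmetic difference is that you state the pointwise domination of the finish times themselves, whereas the paper compares the received-result count against the two thresholds directly at the deadline $d$; the underlying argument is identical.
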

The proof of the lemma \ref{lemma:mon} is provided in the Appendix \ref{appendix:proof_mon}. 

\subsection{Load Allocation Problem} \label{sec:two}
From Lemma \ref{lemma:mon}, by fixing Lagrange coding scheme denoted by $\vec{g^*}$, the optimization problem proposed in Subsection \ref{subsec:success_prob} can be simplified to the optimization problem that only has load allocation vector as variables. We now introduce an optimization problem called \textit{Load Allocation Problem} which is defined as follows:
 
 \textbf{Load Allocation Problem:}
\begin{align}
    \text{Maximize} \ &\mathbb{P}(T^{(\vec{\ell},\vec{g^*})} \leq d) \label{eq:obj2}\\
    \text{subject to} \ & \sum^n_{i=1} \ell_{i} \geq K^*, \\
    & 0 \leq \ell_{i} \leq r, \ \ell_i \in \mathbb{Z}, \forall 1 \leq i \leq n.
\end{align}
where $K^*$ is the optimal recovery threshold defined in (\ref{eq:recovery1}) and (\ref{eq:recovery2}). Note that the proposed load allocation problem is a combinatorial optimization problem that in general requires combinatorial search over all possible allocations to maximize the success probability.

To show that load allocation problem can be solved efficiently, we first present the following lemma whose proof is provided in Appendix \ref{appendix:proof_twovalue}.
\begin{lemma} 
\label{lemma:twovalue}
Given a deadline $d$, if a load allocation vector $\vec{\ell}$ has the success probability $\mathbb{P}(T^{(\vec{\ell},\vec{g^*})}(\vec{\mu}) \leq d)$, then there exists a load allocation vector $\vec{\ell^{'}}$ with success probability $\mathbb{P}(T^{(\vec{\ell^{'}},\vec{g^*})}(\vec{\mu}) \leq d)$ such that $\mathbb{P}(T^{(\vec{\ell^{'}},\vec{g^{*}})}(\vec{\mu}) \leq d) \geq \mathbb{P}(T^{(\vec{\ell},\vec{g^*})}(\vec{\mu}) \leq d)$ and $\ell^{'}_i \in \{\ell_g,\ell_b\}$ where $\ell_g = \min(\mu_gd,r)$ and $\ell_b = \mu_bd$.
\end{lemma}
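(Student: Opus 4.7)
The plan is to establish a pointwise (sample-path) dominance: for every realization of the speed vector $\vec{\mu}$, the number of evaluations that the master receives by time $d$ under $\vec{\ell}'$ is at least the number received under $\vec{\ell}$. Since the Lagrange coding scheme $\vec{g^{*}}$ decodes as soon as any $K^{*}$ evaluations arrive, this event-wise inclusion $\{T^{(\vec{\ell},\vec{g^{*}})} \leq d\} \subseteq \{T^{(\vec{\ell}',\vec{g^{*}})} \leq d\}$ immediately yields the desired inequality on success probabilities after taking expectations.

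First I would observe that each worker returns its $\ell_i$ results in a single batch only after finishing all of them (as described in Subsection \ref{label:comp_model}), so worker $i$ contributes $\ell_i$ evaluations by time $d$ if $\ell_i/\mu_{m,i} \leq d$ and $0$ otherwise. Plugging in $\mu_g$ or $\mu_b$ partitions the load axis into three regimes: if $\ell_i \leq \ell_b$ then worker $i$ contributes $\ell_i$ in both states; if $\ell_b < \ell_i \leq \ell_g$ then worker $i$ contributes $\ell_i$ in the good state and $0$ in the bad state; if $\ell_i > \ell_g$ then worker $i$ contributes $0$ in both states. Thus each single-state contribution is a step function of $\ell_i$ that drops to zero only at the threshold $\ell_b$ (for the bad-state contribution) or $\ell_g$ (for the good-state contribution).

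Next I would construct $\vec{\ell}'$ coordinate by coordinate: set $\ell_i' = \ell_b$ whenever $\ell_i \leq \ell_b$, set $\ell_i' = \ell_g$ whenever $\ell_b < \ell_i \leq \ell_g$, and set $\ell_i' = \ell_g$ (say) whenever $\ell_i > \ell_g$. A case-by-case check against the three regimes shows that for each worker $i$ and each of the two possible states, the contribution under $\ell_i'$ weakly dominates the contribution under $\ell_i$: in the first regime both contributions increase from $\ell_i$ to $\ell_b$; in the second regime the good-state contribution increases from $\ell_i$ to $\ell_g$ while the bad-state contribution stays at $0$; in the third regime both contributions increase from $0$. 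Summing over $i$ gives pointwise dominance of the total evaluations delivered by time $d$, which by monotonicity of the event $\{\text{total received} \geq K^{*}\}$ in the individual contributions produces the event inclusion above.

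The main obstacle is feasibility of $\vec{\ell}'$, namely $0 \leq \ell_i' \leq r$ with $\ell_i' \in \mathbb{Z}$. Here $\ell_g = \min(\mu_g d, r) \leq r$ by definition, and $\ell_b = \mu_b d$ is taken to satisfy $\ell_b \leq r$ under the standing convention of the paper (otherwise a bad-state worker cannot even execute its bad-state load, which is the regime implicitly excluded by the choice $\ell_b \triangleq \mu_b d$); integrality is inherited from the counting interpretation of loads, with any non-integer $\mu_g d$ or $\mu_b d$ replaced by their floors throughout the case analysis without changing any of the monotonicity inequalities. The delicate point is the middle regime $\ell_b < \ell_i \leq \ell_g$, where pushing $\ell_i$ down to $\ell_b$ would sacrifice the good-state contribution, so one must push $\ell_i$ up to $\ell_g$ instead; verifying that the upward push still dominates entrywise in both states is what makes the reduction to two-valued allocations go through.
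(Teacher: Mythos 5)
Your proof is correct and follows essentially the same route as the paper's: the same rounding construction of $\vec{\ell^{'}}$ (round down to $\ell_b$ below the bad-state threshold, up to $\ell_g$ otherwise), the same per-worker, per-state case analysis yielding sample-path dominance of the number of results returned by time $d$, and the same event-inclusion step before taking probabilities. The only difference is that you handle the regime $\ell_i > \ell_g$ and the feasibility/integrality of $\vec{\ell^{'}}$ explicitly, details the paper's appendix glosses over.
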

 By Lemma \ref{lemma:twovalue}, we can focus on finding the optimal load allocation vector by searching all $\vec{\ell}$ satisfying that $\ell_i \in \{\ell_g,\ell_b\}$ for all $i$. To find the optimal load allocation vector, we now consider the load allocation vector characterized by the set $\mathcal{G}_g = \{i:\ell_i = \ell_g\ ,1 \leq i \leq n\}$ which represents the set of workers that computes $\ell_g$ evaluations locally. Once the set $\mathcal{G}_g$ has been determined, $\mathcal{G}_b$ representing the set of workers that computes $\ell_b$ evaluations can be defined as $\{i:i \in [n] \backslash \mathcal{G}_g\}$.
 
 Since $\frac{\ell_b}{\mu_{i}}$ is always less than $d$, the workers in $\mathcal{G}_b$ will always send the results back to the master in time $d$. Since the optimal recovery threshold is $K^*$ using Lagrange coding scheme, the master has to receive at least $K^* -|\mathcal{G}_b|\ell_b$ results from the workers in $\mathcal{G}_g$ to recover the computation in time $d$. That is, there must be at least $\lceil \frac{K^* -|\mathcal{G}_b|\ell_b}{\ell_g} \rceil$ workers in the good state in set $\mathcal{G}_g$. We define $  a(\mathcal{G}_g) \triangleq \lceil \frac{K^*-(n-|\mathcal{G}_g|)\ell_b}{\ell_g} \rceil$ which denotes the minimum number of workers in the good state in set $\mathcal{G}_g$ to guarantee that the master can recover the computation in time $d$. 
  
  Before writing the success probability as a function of $\mathcal{G}_g$, we first define the following terms. We define $T^{(\mathcal{G}_g)}(\vec{\mu})$ as the random variable denoting the finish time using the allocation vector characterized by $\mathcal{G}_g$. We denote $p_{g,i}$ as the probability that worker $i$ is in the good state and $p_{b,i}$ as the probability that worker $i$ is in the bad state. Also, we denote the random variable that represents the number of workers being in good state in set $\mathcal{G}$ as $Q(\mathcal{G})$.
  
 Using the load allocation vector characterized by $\mathcal{G}_g$, we can find the success probability which is a function of $\mathcal{G}_g$ as follows:\\
  (1) $a(\mathcal{G}_g) > |\mathcal{G}_g|$: In this case, the master needs at least $a(\mathcal{G}_g)$ workers being in good state which is greater than $|\mathcal{G}_g|$. It implies that $\mathbb{P}(T^{(\mathcal{G}_g)}(\vec{\mu}) \leq d) =0$.\\
  (2) $0\leq a(\mathcal{G}_g) \leq |\mathcal{G}_g|$: In this case, we have
\begin{align}
     &\mathbb{P}(T^{(\mathcal{G}_g)}(\vec{\mu}) \leq d) =  \mathbb{P}( Q(\mathcal{G}_g) \geq a(\mathcal{G}_g))
    =  \sum^{|\mathcal{G}_g|}_{l=a(\mathcal{G}_g)}\mathbb{P}(Q(\mathcal{G}_g) = l) \nonumber\\
     = &\sum^{|\mathcal{G}_g|}_{l=a(\mathcal{G}_g)}\sum_{\mathcal{G}:\mathcal{G}\subseteq\mathcal{G}_g,|\mathcal{G}|=l} \prod_{i \in \mathcal{G}} p_{g,i} \prod_{i \in \mathcal{G}_g \backslash \mathcal{G}} p_{b,i}. \label{eq:obj_poly}
\end{align}

Therefore, our goal is to find the optimal set $\mathcal{G}^{*}_g$ characterizing the optimal load allocation vector which maximizes the success probability over all possible sets $ \mathcal{G}_g \subseteq [n]$. The complexity of searching over all possible sets $ \mathcal{G}_g \subseteq [n]$ grows exponentially with $n$, since there are overall $2^n$ choices for $\mathcal{G}_g$. 

 The following lemma shows that the optimal $\mathcal{G}^{*}_g$ contains the workers having the largest $p_{g,i}$ among all the workers, which largely reduce the time complexity of finding the optimal $\mathcal{G}^{*}_g$,
\begin{lemma} \label{lemma:optimalset}
Without loss of generality, we assume $p_{g,1} \geq p_{g,2} \geq \dots \geq p_{g,n}$. Considering all possible sets $\mathcal{G}_g$ with fixed cardinality $n_g$, the optimal $\mathcal{G}^{*}_g$ with cardinality $n_g$ that maximizes the success probability is 
\begin{align}
    \mathcal{G}^{*}_g =\{1,2,\dots,n_g\}
\end{align}
which represents the set of $n_g$ workers having largest $p_{g,i}$ among all the workers. 
\end{lemma}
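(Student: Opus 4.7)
I plan to prove this by a simple pairwise swap (exchange) argument, exploiting the fact that the threshold $a(\mathcal{G}_g)$ in equation~(\ref{eq:obj_poly}) depends on $\mathcal{G}_g$ only through its cardinality. Thus, as we range over all $\mathcal{G}_g$ with $|\mathcal{G}_g| = n_g$, the quantity $a \triangleq a(\mathcal{G}_g) = \lceil (K^* - (n-n_g)\ell_b)/\ell_g \rceil$ is constant, and maximizing the success probability is equivalent to maximizing $\mathbb{P}(Q(\mathcal{G}_g) \geq a)$, where $Q(\mathcal{G}_g) = \sum_{i \in \mathcal{G}_g} X_i$ is a sum of independent Bernoulli variables with $X_i \sim \mathrm{Ber}(p_{g,i})$.

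The case $a > n_g$ is trivial since both sides of the claimed inequality are zero, so I would assume $0 \leq a \leq n_g$. The main step is the following exchange lemma: suppose $\mathcal{G}_g$ contains some index $i$ while some $j \notin \mathcal{G}_g$ satisfies $p_{g,j} \geq p_{g,i}$. Let $\mathcal{G}'_g = (\mathcal{G}_g \setminus \{i\}) \cup \{j\}$ and let $Y = \sum_{s \in \mathcal{G}_g \setminus \{i\}} X_s$, so that $Y$ is independent of $X_i$ and $X_j$. Conditioning on $X_i$ and $X_j$,
\begin{align}
\mathbb{P}(Q(\mathcal{G}_g) \geq a) &= p_{g,i}\,\mathbb{P}(Y = a-1) + \mathbb{P}(Y \geq a), \\
\mathbb{P}(Q(\mathcal{G}'_g) \geq a) &= p_{g,j}\,\mathbb{P}(Y = a-1) + \mathbb{P}(Y \geq a),
\end{align}
where the convention $\mathbb{P}(Y = -1) = 0$ handles $a = 0$. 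Subtracting, $\mathbb{P}(Q(\mathcal{G}'_g) \geq a) - \mathbb{P}(Q(\mathcal{G}_g) \geq a) = (p_{g,j} - p_{g,i})\,\mathbb{P}(Y = a-1) \geq 0$, so the swap does not decrease the success probability.

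To finish, I would note that if $\mathcal{G}_g \neq \{1,\dots,n_g\}$, there must exist $i \in \mathcal{G}_g$ with $i > n_g$ and $j \in \{1,\dots,n_g\} \setminus \mathcal{G}_g$; by the assumed ordering $p_{g,j} \geq p_{g,i}$, so the exchange lemma applies. Performing such swaps one at a time strictly decreases the quantity $\sum_{i \in \mathcal{G}_g} i$ (hence this process terminates in at most $n$ steps), and each swap weakly increases the success probability. Therefore $\{1,\dots,n_g\}$ is optimal.

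The only real care is in the conditioning step and the edge case $a=0$; beyond that, the argument is routine. I do not expect any significant obstacle, since the key structural observation—that $a(\mathcal{G}_g)$ depends only on $|\mathcal{G}_g|$—eliminates any interaction between the threshold and the identity of the chosen workers, so the problem reduces cleanly to a tail-probability comparison for sums of independent Bernoullis.
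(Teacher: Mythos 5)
Your proposal is correct and is essentially the same exchange argument the paper uses: both rely on the observation that $a(\mathcal{G}_g)$ depends only on $|\mathcal{G}_g|$, condition on the state of the swapped worker to get a difference of the form $(p_{g,j}-p_{g,i})\,\mathbb{P}(Y = a-1) \geq 0$, and conclude that $\{1,\dots,n_g\}$ is optimal. Your handling of the $a=0$ edge case and the explicit termination argument via $\sum_{i\in\mathcal{G}_g} i$ are minor tidy additions, but the substance is identical.
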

\begin{proof}
For a fixed integer $n_g$, we suppose $\mathcal{G}_1$ is the optimal set with cardinality $n_g$ where $i \notin \mathcal{G}_1$ and $1\leq i \leq n_g$. Thus, there exists a $j \in \mathcal{G}_1$ such that $j > n_g$. Then, we construct a set $\mathcal{G}_2 = (\mathcal{G}_1 \backslash \{j\}) \cup \{i\}$. The success probability of using the load allocation vector characterized by $\mathcal{G}_1$ can be written as 
\begin{align}
    &\mathbb{P}(T^{(\mathcal{G}_1)}(\vec{\mu}) \leq d) = \mathbb{P}( Q(\mathcal{G}_1) \geq a(\mathcal{G}_1)) \\
    = &p_{g,j}\mathbb{P}(Q(\mathcal{G}_1 \backslash \{j\}) \geq a(\mathcal{G}_1)-1)+(1-p_{g,j})\mathbb{P}(Q(\mathcal{G}_1 \backslash \{j\}) \geq a(\mathcal{G}_1)) \nonumber
\end{align}
where the first term is the success probability when worker $j$ is in the good state, and the second term is the success probability when worker $j$ is in bad state. 
Similarly, the success probability of using the load allocation vector characterized by $\mathcal{G}_2$ can be written as 
\begin{align}
    &\mathbb{P}(T^{(\mathcal{G}_2)}(\vec{\mu}) \leq d) = \mathbb{P}( Q(\mathcal{G}_2) \geq a(\mathcal{G}_2)) \\
    = &p_{g,i}\mathbb{P}(Q(\mathcal{G}_2 \backslash \{i\})\geq a(\mathcal{G}_2)-1)+(1-p_{g,i})\mathbb{P}(Q(\mathcal{G}_2 \backslash \{i\})\geq a(\mathcal{G}_2)) \nonumber,
\end{align}    
which can be further written as 
\begin{align}
    p_{g,i}\mathbb{P}(Q(\mathcal{G}_1 \backslash \{j\})\geq a(\mathcal{G}_1)-1)+(1-p_{g,i})\mathbb{P}(Q(\mathcal{G}_1 \backslash \{j\}) \geq a(\mathcal{G}_1)) \nonumber
\end{align}
since $\mathcal{G}_2 = (\mathcal{G}_1 \backslash \{j\}) \cup \{i\}$ and $a(\mathcal{G}_1) = a(\mathcal{G}_2)$.
Because $p_{g,i} \geq p_{g,j}$ and $\mathbb{P}(Q(\mathcal{G}_1 \backslash \{j\})\geq a(\mathcal{G}_1)-1) \geq \mathbb{P}(Q(\mathcal{G}_1 \backslash \{j\}) \geq a(\mathcal{G}_1))$, we have
\begin{align}
    &\mathbb{P}(T^{(\mathcal{G}_2)}(\vec{\mu}) \leq d)-\mathbb{P}(T^{(\mathcal{G}_1)}(\vec{\mu}) \leq d)\\
  =  & (p_{g,i}-p_{g,j})\{\mathbb{P}(Q(\mathcal{G}_1 \backslash \{j\})\geq a(\mathcal{G}_1)-1) - \mathbb{P}(Q(\mathcal{G}_1 \backslash \{j\}) \geq a(\mathcal{G}_1))\} \nonumber \\  \geq & 0 \nonumber
\end{align}
which is a contradiction. Thus, the optimal set $\mathcal{G}_g$ with fixed cardinality $n_g$ must include $i$ for all $1 \leq i \leq n_g$.
\end{proof}
By Lemma \ref{lemma:optimalset}, for a fixed cardinality $n_g$, the optimal $\mathcal{G}^*_g$ is the collection of $n_g$ workers having largest $p_{g,i}$ among all the workers. Therefore, to find the optimal load allocation vector, we can only focus on finding the optimal $n^{*}_g$. Since there are only $n$ choices for $n^{*}_g$ (i.e. $1,2,\dots,n$), the complexity of searching the optimal $n^{*}_g$ is linear in the number of workers $n$ which is much smaller than $2^n$ .

The following theorem shows that the computation strategy composed of the Lagrange coding scheme and the load allocation vector that is the solution of load allocation problem achieves the optimal timely computation throughput when the Markov model is known to the master.
\begin{theorem}\label{thm:opt_markov}
Assume the Markov model of the network is know to the master. Let the computation strategy $\eta^* =(\vec{g^*},\{\vec{\ell^*}_m\}^{\infty}_{m=1})$ be the computation strategy where $\vec{g^*}$ is the Lagrange coding scheme and $\{\vec{\ell^*}_m\}^{\infty}_{m=1}$ is given by solving load allocation problem. Then, $\eta^*$ achieves the optimal timely computation throughput.
\end{theorem}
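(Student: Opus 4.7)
The plan is to show that $\eta^*$ attains the best achievable per-round conditional success probability given the observed history, and then to lift this one-round optimality to the Cesàro average defining $R(d,\eta)$ through a martingale/ergodic argument. Let $\mathcal{F}_{m-1}$ denote the filtration generated by the state vectors $\vec{S}[1],\dots,\vec{S}[m-1]$. Since the $n$ workers evolve as independent Markov chains, the conditional distribution of $\vec{S}[m]$ given $\mathcal{F}_{m-1}$ depends only on $\vec{S}[m-1]$, and $p_{g,i}(m):=\mathbb{P}(S_i[m]=g\,|\,\mathcal{F}_{m-1})$ equals either $p_{g\to g,i}$ or $1-p_{b\to b,i}$ depending on $S_i[m-1]$. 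Because the Markov model is known to the master, $\eta^*$ can compute these exact probabilities in every round.

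For any admissible strategy $\eta=(\vec{g},\{\vec{\ell}_m\})$ whose load assignment is $\mathcal{F}_{m-1}$-measurable,
\begin{align*}
\mathbb{E}[N_m(d)\,|\,\mathcal{F}_{m-1}]=\mathbb{P}(T^{(\vec{\ell}_m,\vec{g})}\le d\,|\,\mathcal{F}_{m-1})\le \phi(\vec{S}[m-1]),
\end{align*}
where $\phi(\vec{s})$ is the optimal value of the Load Allocation Problem solved with Lagrange coding and with the one-step good-state probabilities induced by $\vec{s}$. The inequality uses Lemma~\ref{lemma:mon} to replace $\vec{g}$ by $\vec{g^*}$ (monotonicity in the recovery threshold) and the definition of the Load Allocation Problem to bound over $\vec{\ell}_m$. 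By construction, $\eta^*$ solves this optimization in every round, so $\mathbb{E}[N^*_m(d)\,|\,\mathcal{F}_{m-1}]=\phi(\vec{S}[m-1])$ with equality.

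To convert these per-round inequalities into a bound on $R(d,\eta)$, I note that $\{N_m(d)-\mathbb{E}[N_m(d)\,|\,\mathcal{F}_{m-1}]\}_m$ is a bounded martingale difference sequence, so the martingale strong law (or Azuma--Hoeffding plus Borel--Cantelli) yields
\begin{align*}
\frac{1}{M}\sum_{m=1}^M N_m(d)-\frac{1}{M}\sum_{m=1}^M \mathbb{E}[N_m(d)\,|\,\mathcal{F}_{m-1}]\xrightarrow{\text{a.s.}}0.
\end{align*}
Since $\{\vec{S}[m]\}$ is a finite-state irreducible aperiodic Markov chain started at its stationary distribution $\pi$, Birkhoff's ergodic theorem gives $\frac{1}{M}\sum_{m=1}^M \phi(\vec{S}[m-1])\xrightarrow{\text{a.s.}}\mathbb{E}_\pi[\phi(\vec{S})]$. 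Combining these with the upper bound (and with the corresponding equality for $\eta^*$) yields $R(d,\eta)\le \mathbb{E}_\pi[\phi(\vec{S})]=R(d,\eta^*)$ for every $\eta\in\Gamma$, establishing optimality.

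The main obstacle is reconciling a history-dependent and potentially non-stationary competing strategy $\eta$ with the desire to interpret the long-run average through a stationary ergodic process. For $\eta^*$ the conclusion is essentially immediate because $\vec{\ell^*}_m$ is a deterministic function of $\vec{S}[m-1]$, so $(\vec{S}[m-1],N^*_m(d))$ is itself stationary and ergodic; for arbitrary $\eta$ this fails, which is precisely why one must route the comparison through the martingale SLLN applied to the conditional expectations rather than directly through the ergodic theorem on $N_m(d)$. A minor technical point is that $R(d,\eta)$ is defined as a limit which need not exist under an arbitrary $\eta$; this is handled by replacing $\lim$ with $\limsup$ and observing that the same chain of inequalities still produces $\limsup_M \frac{1}{M}\sum_{m=1}^M N_m(d)\le R(d,\eta^*)$ almost surely.
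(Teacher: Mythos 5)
Your proposal is correct, and it is in fact considerably more careful than the paper's own two-sentence argument. The paper's proof of this theorem simply observes that $N_m(d)$ is Bernoulli with parameter $\mathbb{P}(m)$ and asserts that since $\eta^*$ maximizes $\mathbb{P}(m)$ for every $m$ it must maximize the throughput; it does not say with respect to what information $\mathbb{P}(m)$ is computed, nor why round-by-round maximization of a (history-dependent) success probability implies maximization of the almost-sure limit of the Ces\`{a}ro average. Your proof supplies exactly the missing structure: (i) you make precise that the relevant quantity is the conditional success probability given $\mathcal{F}_{m-1}$, which for an $\mathcal{F}_{m-1}$-measurable load allocation is bounded by the value $\phi(\vec{S}[m-1])$ of the Load Allocation Problem (using Lemma~\ref{lemma:mon} to reduce to Lagrange coding, exactly as the paper does); (ii) you lift the per-round bound to the time average via the martingale-difference SLLN, which is the right tool because for an arbitrary competing strategy the indicators $N_m(d)$ are neither independent nor stationary; and (iii) you use Birkhoff/the ergodic theorem for the stationary chain $\vec{S}[m]$ to identify the limit $\mathbb{E}_\pi[\phi(\vec{S})]$, which is the same ergodic-theorem computation the paper defers to the proof of Theorem~\ref{thm:opt_ea}. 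Your observation that $R(d,\eta)$ need not exist as a limit for an arbitrary $\eta$ and should be read as a $\limsup$ is also a genuine (if minor) repair of the problem formulation. In short: same underlying idea as the paper, but you have closed the gap between ``optimal per round'' and ``optimal in the long-run average'' that the paper leaves implicit.
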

\begin{proof}
We consider the computation of round $m$ and denote $N_m(d)$ as the indicator represents whether the computation is finished in time $d$ in round $m$ using an arbitrary computation strategy. Clearly, $N_m(d)$ is a Bernoulli random variable with parameter $\mathbb{P}(m)$ which denotes the success probability using this computation strategy in round $m$. Thus, $N_m(d)$ would contribute to the throughput with probability $\mathbb{P}(m)$. Since $\eta^*$ maximizes $\mathbb{P}(m)$ for all $m$, this strategy is optimal.
\end{proof}
Since the Markov model is unknown to the master in the original problem, the timely computation throughput achieved by $\eta^*$ gives us an upper bound. In the next section, we will show that this upper bound can be matched by using LEA.

\section{Optimality of LEA} 
Now, we show the optimality of LEA by the following theorem. 
\begin{theorem} \label{thm:opt_ea}
The proposed Lagrange Estimate-and-Allocate (LEA) strategy is optimal, i.e.,
\begin{align}
    R_{\text{LEA}}(d) =R^{*}(d) \ \text{almost surely},
\end{align}
where $R_{\text{LEA}}(d)$ denotes the timely computation throughput using the LEA strategy.
\end{theorem}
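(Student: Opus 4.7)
The plan is to combine three ingredients: strong consistency of the empirical transition estimates used inside LEA; a pointwise matching of LEA's load allocation with the optimal known-Markov allocation $\vec{\ell}^*_m$ of Theorem~\ref{thm:opt_markov} for all sufficiently large $m$; and an ergodic-theorem / coupling argument translating the per-round matching into equality of the long-run time averages.

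First, I would establish that $\hat{p}_{g \to g,i}(m) \to p_{g\to g,i}$ and $\hat{p}_{b\to b,i}(m) \to p_{b\to b,i}$ almost surely for each worker $i$. Since each chain $\{S_i[m]\}$ is a two-state irreducible Markov chain (in the non-degenerate regime, which is the only case where both estimators play a nontrivial role) with stationary distribution $\pi_i$, the strong law of large numbers for ergodic Markov chains applied to the indicator functionals $\mathbf{1}[S_i[m-1]=a,\,S_i[m]=b]$ gives $C_{a\to b,i}(m)/m \to \pi_i(a)\,P_i(a,b)$ almost surely; dividing yields the claimed convergence. In particular, given the previous state $S_i[m-1]$, the estimate $\hat{p}_{g,i}(m)$ converges almost surely to the true one-step probability $p_{g,i}(m) \triangleq \mathbb{P}(S_i[m]=g \mid S_i[m-1])$.

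Next, I would couple LEA with the benchmark strategy $\eta^*$ on a common probability space carrying a single realization of $\{\vec{S}[m]\}$. The estimated objective $\hat{\mathbb{P}}_m(\tilde{i})$ is the same polynomial in $(\hat{p}_{g,i}(m))_i$ as the true objective $\mathbb{P}_m(\tilde{i})$ is in $(p_{g,i}(m))_i$; by uniform continuity on $[0,1]^n$ and the previous step, $\max_{\tilde{i}}|\hat{\mathbb{P}}_m(\tilde{i}) - \mathbb{P}_m(\tilde{i})| \to 0$ almost surely. Because $\vec{S}[m-1]$ lives in the finite set $\{g,b\}^n$, there are only finitely many true probability vectors $(p_{g,i}(m))_i$, and each yields an optimal set $\mathcal{G}^*_g$ (via the sorting step of Lemma~\ref{lemma:optimalset}) separated by a positive margin from any strictly suboptimal set. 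Once the estimation error drops below half the smallest such margin, the LEA selection agrees with the benchmark in success probability; ties, both in the sorting of $\{p_{g,i}(m)\}$ and in the argmax over $\tilde{i}$, are innocuous because tied choices yield identical success probabilities. Consequently, there exists an almost-surely finite random $M_0$ such that for all $m \geq M_0$ the LEA allocation realizes the same conditional success probability as $\vec{\ell}^*_m$; since the indicator $N_m(d)$ is a deterministic function of the allocation and of $(\vec{S}[m-1], \vec{S}[m])$, we can in fact match the outcomes themselves, i.e., $N^{\text{LEA}}_m(d) = N^*_m(d)$ for all $m \geq M_0$ on the coupled chain.

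Finally, I would invoke the ergodic theorem on the joint chain $\{\vec{S}[m]\}$, which is irreducible and positive recurrent on $\{g,b\}^n$, to obtain $\tfrac{1}{M}\sum_{m=1}^{M} N^*_m(d) \to R^*(d)$ almost surely; combined with $\tfrac{1}{M}\sum_{m=1}^{M}(N^{\text{LEA}}_m(d) - N^*_m(d)) \to 0$ from the coupling, this yields $R_{\text{LEA}}(d) = R^*(d)$ almost surely, matching the upper bound from Theorem~\ref{thm:opt_markov}. The main obstacle is the second step: the data-driven argmax and sorting can flip between nearly optimal alternatives, and one has to argue carefully (using the finiteness of the underlying state space and the positive separation margins, together with Lemma~\ref{lemma:twovalue} and Lemma~\ref{lemma:optimalset}) that after an almost-surely finite random time the LEA choice either coincides with $\vec{\ell}^*_m$ or lies on a tied level set producing the same success outcome. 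Once this stabilization is in hand, the ergodic-theorem conclusion is routine.
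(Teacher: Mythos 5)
Your overall route is the same as the paper's: (i) strong consistency of the empirical transition estimates via the SLLN for the underlying two-state chains, (ii) convergence of LEA's per-round success probability to the optimal one, and (iii) an ergodic-theorem-plus-coupling argument on the irreducible $2^n$-state joint chain $\{\vec{S}[m]\}$ to pass from per-round optimality to equality of the long-run averages. The paper packages step (ii) as Lemma~\ref{lemma:success}, proved by a polynomial-continuity bound showing $\mathbb{P}_{\text{LEA}}(m) > \mathbb{P}^*(m) - \epsilon$ eventually, and then in step (iii) thins $N_{\text{LEA},m}(d)$ into an independent Bernoulli process of parameter $\mathbb{P}^*(m)-\epsilon$, applies SLLN/ergodicity state-by-state, and lets $\epsilon \to 0$. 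You instead exploit the finiteness of the conditioning state space to extract a positive optimality margin and conclude that after an a.s.\ finite time $M_0$ the LEA allocation attains exactly the optimal conditional success probability; this is a cleaner and slightly stronger intermediate conclusion that avoids the final $\epsilon \to 0$ limit, and it is valid (the argmax of a uniformly $\delta/2$-close surrogate lands on the true level set once $\delta$ is below the margin, and ties in $p_{g,i}$ are symmetric in the success-probability formula).

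One step overreaches: the claim that for $m \geq M_0$ you can ``match the outcomes themselves,'' i.e.\ $N^{\text{LEA}}_m(d) = N^*_m(d)$ pathwise. That identity holds only when the two allocations select the \emph{same} set $\mathcal{G}_g$; on a tied level set (two sets with equal success probability but different members), the realized indicators can differ for a given realization of $\vec{S}[m]$ even though their conditional means agree. This does not sink the proof --- since both indicators are conditionally Bernoulli with the same parameter given $\vec{S}[m-1]$ for $m \geq M_0$, a conditional SLLN (or applying the ergodic theorem to each time average separately, as you do for $N^*_m$) still gives $\tfrac{1}{M}\sum_{m=1}^{M}\bigl(N^{\text{LEA}}_m(d) - N^*_m(d)\bigr) \to 0$ a.s.\ --- but the justification should be stated in terms of matching conditional distributions rather than matching sample paths. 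With that repair the argument is complete and, in my view, somewhat tighter than the paper's $\epsilon$-deflation coupling.
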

\begin{proof}
In order to prove Theorem \ref{thm:opt_ea}, we first state Lemma \ref{lemma:success} whose proof is moved to Appendix \ref{appendix:proof_success} for the purpose of readibility.
\begin{lemma} \label{lemma:success}
$\mathbb{P}_{\text{LEA}}(m)$ converges to $\mathbb{P}^{*}(m)$ as $m$ goes to infinity, where $\mathbb{P}^{*}(m)$ denotes the optimal success probability in round $m$ and $\mathbb{P}_{\text{LEA}}(m)$ denotes the success probability in round $m$ using the LEA strategy.
\end{lemma}
Before proving the optimality of LEA, we first define the following terms. We denote $N^{*}_m(d)$ as the indicator representing whether the computation is finished by time $d$ in round $m$ using the optimal computation strategy which maximizes the success probability in round $m$. Clearly, $N^{*}_m(d)$ is a Bernoulli random
variable with parameter $\mathbb{P}^{*}(m)$. Also, we denote $N_{\text{LEA},m}(d)$ as the indicator representing whether the computation is finished in time $d$ in round $m$ using LEA. Then, $N_{\text{LEA},m}(d)$ is a Bernoulli random variable with parameter $\mathbb{P}_{\text{LEA}}(m)$. We denote $R_{\text{LEA}}(d)$ as the timely computation throughput using LEA.

Now, we model the state of the whole system which includes all $n$ workers as a Markov chain. Since each worker has $2$ states (good or bad), there are a total of $2^n$ different states of the system. Without loss of generality, we index the states of the system as $\{1,2,\dots,2^n\}$. Clearly, the transition matrix of this Markov chain has all the entries larger than $0$. It implies that this Markov chain is irreducible. We denote $s(m)$ as the state of the system in round $m$. Also, $p^{*}_s$ is denoted as the success probability of state $s$ using the optimal computation strategy, i.e., $\mathbb{P}^{*}(m) = p^{*}_s$ if $s(m)=s$. By the Strong Law of Large Numbers and the Ergodic theorem, the optimal timely computation throughput $R^{*}(d)$ can be written as  
\begin{align}
    & R^{*}(d)  = \lim_{M \rightarrow \infty} \frac{\sum^M_{m=1}N^{*}_m(d)}{M}\\ = & \lim_{M \rightarrow \infty} \sum^{2^n}_{s=1} \frac{\sum_{m \geq 1:s(m)=s}N^{*}_m(d)}{V_i(M)}\frac{V_i(M)}{M}=  \sum^{2^n}_{s=1} p^{*}_s\frac{1}{\mathbb{E}_s[T_s]} \quad a.s.,  \nonumber
\end{align}
where the Ergodic theorem is formally stated as follows:
\begin{theorem*} [Ergodic Theorem]
If transition matrix $P$ of a Markov chain $(X_m)_{m \geq 0}$ is irreducible, then we have
\begin{align}
\lim_{m \rightarrow \infty} \frac{V_s(m)}{m} = \frac{1}{\mathbb{E}_s[T_s]}  \quad  a.s.
\end{align}  
where $V_s(m)$ is the number of visits to state $s$ up to round $m$ and $\mathbb{E}_s[T_s]$ is the expected return time to state $s$.
\end{theorem*}
By Lemma \ref{lemma:success}, for all $\epsilon > 0$, there exits $m(\epsilon)$ such that $\mathbb{P}_{\text{LEA}}(m) > \mathbb{P}^{*}(m)-\epsilon$ for all $m > m(\epsilon)$. Let $\tilde{N}_m(d)$ be the independent Bernoulli process with parameter $\mathbb{P}^{*}(m)-\epsilon$. We couple $N_{\text{LEA},m}(d)$ and $\tilde{N}_m(d)$ as follows. If $N_{\text{LEA},m}(d)=0$, then $\tilde{N}_m(d) = 0$. If $N_{\text{LEA},m}(d)=1$, then $\tilde{N}_m(d) = 1$ with probability $\frac{\mathbb{P}^{*}(m)-\epsilon}{\mathbb{P}_{\text{LEA}}(m)}$, and $\tilde{N}_m(d) = 0$ with probability $1-\frac{\mathbb{P}^{*}(m)-\epsilon}{\mathbb{P}_{\text{LEA}}(m)}$. Note that $\tilde{N}_m(d)$ is still marginally independent Bernoulli process of parameter $\mathbb{P}^{*}(m)-\epsilon$. Then, we have
\begin{align}
    & R_{\text{LEA}}(d) = \lim_{M \rightarrow \infty} \frac{\sum^M_{m=1}N_{\text{LEA},m}(d)}{M} \\
    & \geq \lim_{M \rightarrow \infty} \frac{\sum^M_{m=m(\epsilon)+1}N_{\text{LEA},m}(d)}{M}\\
    & \geq \lim_{M \rightarrow \infty}\frac{\sum^M_{m=m(\epsilon)+1}\tilde{N}_m(d)}{M}\\
    & =  \lim_{M \rightarrow \infty} \frac{1}{M}\sum^{2^n}_{s=1} \sum_{m\geq m(\epsilon)+1:s(m)=s}\tilde{N}_m(d)\\
     & =  \lim_{M \rightarrow \infty} \frac{M-m(\epsilon)}{M} \sum^{2^n}_{s=1} \frac{\sum_{m\geq m(\epsilon)+1:s(m)=s}\tilde{N}_m(d)}{V_s(M)-V_s(m(\epsilon))}\frac{V_s(M)-V_s(m(\epsilon))}{M-m(\epsilon)} \nonumber\\
     & = \sum^{2^n}_{s=1} (p^{*}_s-\epsilon) \frac{1}{\mathbb{E}_s[T_s]} = \sum^{2^n}_{s=1} p^{*}_s\frac{1}{\mathbb{E}_s[T_s]} - \sum^{2^n}_{s=1}\epsilon\frac{1}{\mathbb{E}_s[T_s]}\\ & = R^{*}(d) - \sum^{2^n}_{s=1}\epsilon\frac{1}{\mathbb{E}_s[T_s]} \quad a.s.
\end{align}
using the SLLN and the Ergodic theorem. Also, it is clear that $R_{\text{LEA}}(d) \leq R^{*}(d)$. Letting $\epsilon \rightarrow 0$, we have $R_{\text{EA}}(d) =R^{*}(d)$ which completes the proof. 
\end{proof}

\section{Experiments}
In this section, we present our results both from simulation studies as well as from experiments over Amazon EC2 cluster. 
\subsection{Numerical Analysis}
We now present numerical results evaluating the performance of the LEA strategy.

First, we call a computation strategy \emph{static} if this computation strategy assigns the loads to workers without considering their states in previous rounds. For comparison with LEA, we consider the following static computation strategy:

\textbf{Static Computation Strategy}: Prior to computation, Lagrange coding scheme is used for data encoding. In each round $m$, each worker $i$ is assigned a load $\ell_{m,i} \in \{\ell_g,\ell_b\}$ based on the stationary distributions of the underlying Markov model, in which we denote $(\pi_{g,i},\pi_{b,i})$ as stationary distribution of worker $i$. More specifically, for each worker $i$ in each round $m$, this strategy does assignment as follows:
\begin{align}
    \ell_{m,i} = 
    \begin{cases}
    \ell_g \ \text{with probability} \ \pi_{g,i}  \\ 
    \ell_b \ \text{with probability} \ \pi_{b,i}.
    \end{cases}
\end{align}
Note that whenever the total loads of the generated $\vec{\ell}_m$ is smaller than the minimum recovery threshold, then the strategy would do assignments again until the total loads of the generated $\vec{\ell}_m$ is greater than the minimum recovery threshold.

Since static computation strategies don't learn the dynamics of network, they can only do load assignments in a deterministic manner or randomly without using any history. Thus, the chosen static computation strategy which utilizes the stationary distributions of underlying Markov model is better than other static computation strategies in general.

Given deadline $d=1$ second in each round $m$, we consider a problem of evaluating a quadratic function $f_m(X_j)$ $=X_j^{\top}(X_j\vec{w}_m-\vec{y})$ over $n=15$ workers, where the dataset $X_1,X_2,\dots,X_{50} \in \mathbb{R}^{1000 \times 1000}$, $\vec{y} \in \mathbb{R}^{1000 \times 1}$ and $\vec{w}_m \in \mathbb{R}^{1000 \times 1}$ which is the input vector in round $m$. Each worker stores $r=10$ encoded data chunks using Lagrange coding scheme. In such setting, we have the optimal recovery threshold $K^* = 99$ for both LEA and the static computation strategy. 

For simulations, we let $p_{g \rightarrow g,i} = p_{g \rightarrow g}, p_{b \rightarrow b,i} = p_{b \rightarrow b}$ for all $i$, and consider the following four scenarios:\\
\textbf{Scenario 1:} $(\mu_g,\mu_b)=(10,3)$, $(p_{g \rightarrow g},p_{b \rightarrow b}) = (0.8,0.8)$ and the corresponding stationary probabilities $(p_{g},p_{b}) = (0.5,0.5)$.\\
\textbf{Scenario 2:} $(\mu_g,\mu_b)=(10,3)$, $(p_{g \rightarrow g},p_{b \rightarrow b}) = (0.8,0.7)$ and the corresponding stationary probabilities $(p_{g},p_{b}) = (0.6,0.4)$.\\
\textbf{Scenario 3:} $(\mu_g,\mu_b)=(10,3)$, $(p_{g \rightarrow g},p_{b \rightarrow b}) = (0.8,0.533)$ and the corresponding stationary probabilities $(p_{g},p_{b}) = (0.7,0.3)$.\\
\textbf{Scenario 4:} $(\mu_g,\mu_b)=(10,3)$, $(p_{g \rightarrow g},p_{b \rightarrow b}) = (0.9,0.6)$ and the corresponding stationary probabilities $(p_{g},p_{b}) = (0.8,0.2)$.

Fig. \ref{fig:numerical} illustrate the performance comparison for LEA and the static computation strategy. We make the following conclusions from the results:\\
$\bullet$ LEA increases substantial improvement in terms of the timely computation throughput. Over the four scenarios, LEA improves the static computation strategy by $1.38 \times \sim 17.5 \times$.\\
$\bullet$ The timely computation throughput improvements over the static computation strategy become more significant as the stationary probability $p_g$ decreases. When $p_g$ is small, the workers would be in the bad state more probably in the long run. In this sense, the static computation strategy assigns loads to the workers in a more pessimistic way. However, there is temporal correlation of computation speeds which the static computation strategy doesn't take into account. Thus, although $p_g$ is small, LEA can achieve much higher timely computation throughput which demonstrates that LEA can adapt to the dynamics of network well.
\subsection{Experiments using Amazon EC2 machines}
\begin{figure}[t]
  \centering
    \includegraphics[width = \columnwidth]{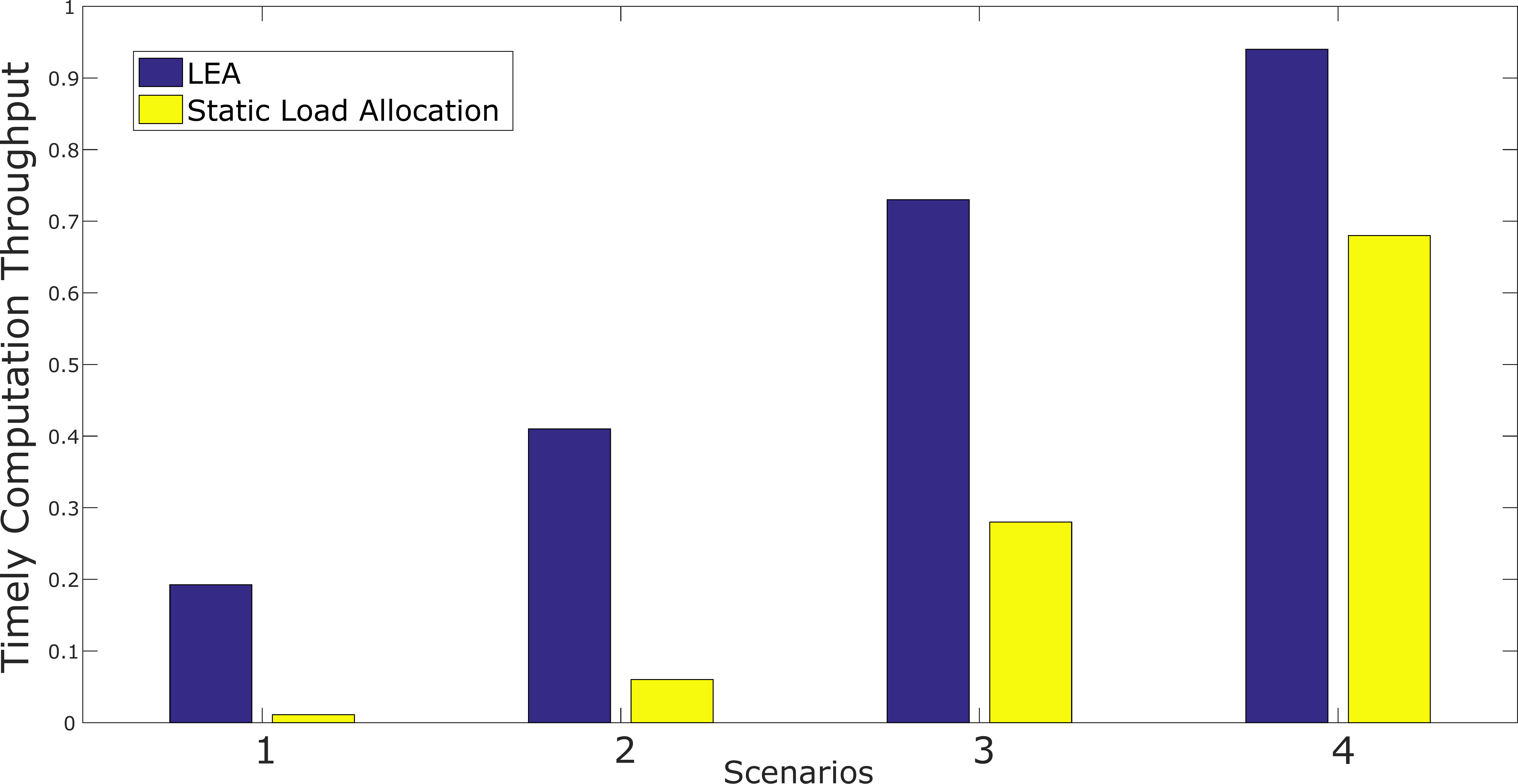}
\caption{Numerical Results}
\label{fig:numerical}
\end{figure}
Before showing the experimental results, we first introduce CPU credits \cite{amazonEC2} which can boost T2 and T3 instances above baseline performance. For a \texttt{t2.micro} instance, as shown in Fig. \ref{fig:credit_speed}, there is a $10$ times difference between baseline performance and burstable performance, i.e., a burst \texttt{t2.micro} instance has computing speed $10$ times faster. The baseline performance and ability to burst are governed by CPU credits. More details of CPU credits and burstable performance can be found in \cite{amazonEC2}.

We ran the master node over \texttt{m4.xlarge} instance and all workers over \texttt{t2.micro} instances. We implemented two computation strategies in python, and used MPI4py \cite{dalcin2011parallel} for message passing between instances. Before starting computations, each worker stores a certain amount of data in its local memory. In round $m$, having received function $f_m$ from the master, each worker computes the assigned computation using the stored data, and sends it back to the master asynchronously using \texttt{Isend()}. As soon as the master gathers enough results from the workers, it computes the evaluations for the function $f_m$.

Given deadline $d$ seconds in each round $m$, we consider a problem of evaluating a linear function $f_m(X_j)$ $=X_j^{\top}B_m$ over $n=15$ workers, where the datasets $\{X_j\}^k_{j=1}$'s are real matrices with certain dimensions, and $B_m \in \mathbb{R}^{3000 \times 3000}$ is the input matrix. Each worker stores $r=10$ encoded data chunks using Lagrange coding scheme. In particular, in each round, the computation request's arrival time is shift-exponential random variable which is the sum of a constant $T_c=30$ and an exponential random variable with mean $\lambda$. In this setting, we have the optimal recovery threshold $K^* = 50$ for both LEA and the static computation strategy. Since the Markov model is unknown (and indeed even the type of the underlying stochastic process determining the states of the workers in the cloud is not known), to compare with the LEA strategy, we consider a static computation strategy that each worker is assigned to $\ell_g$ or $\ell_b$ number of evaluations with equal probability in each round. For experiments, we consider the following six scenarios:\\
\textbf{Scenario 1:} Size of $X_j = 25 \times 3000$, $k=120$, $\lambda =10$ and $d =2.5$. \\
\textbf{Scenario 2:} Size of $X_j = 25 \times 3000$, $k=120$, $\lambda =30$ and $d =2.5$. \\
\textbf{Scenario 3:} Size of $X_j = 30 \times 3000$, $k=100$, $\lambda =10$ and $d =3$. \\
\textbf{Scenario 4:} Size of $X_j = 30 \times 3000$, $k=100$, $\lambda =30$ and $d =3$. \\
\textbf{Scenario 5:} Size of $X_j = 60 \times 3000$, $k=50$, $\lambda =10$ and $d =6$. \\
\textbf{Scenario 6:} Size of $X_j = 60 \times 3000$, $k=50$, $\lambda =30$ and $d =6$. 
\begin{figure}[t]
  \centering
    \includegraphics[width = \columnwidth]{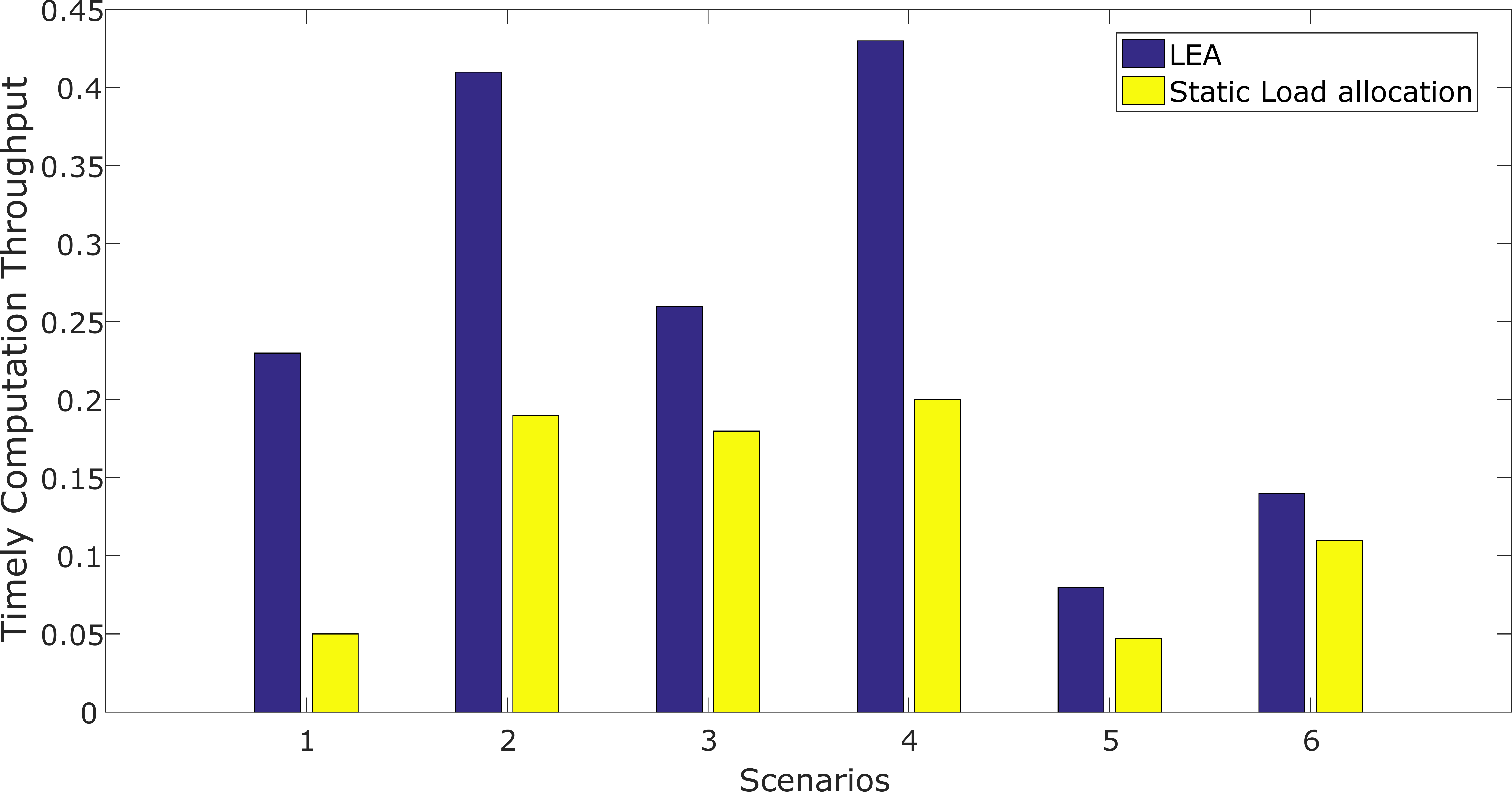}
\caption{Experimental evaluations over $15$ \texttt{t2.micro} instances in Amazon EC2. Compared with the static load allocation strategy, LEA improves the timely computation throughput by $1.27 \times \sim 6.5 \times$.}
\label{fig:experiments}
\end{figure}

Fig. \ref{fig:experiments} provides a performance comparison of LEA with the static load allocation strategy for the six scenarios. From the results, we found that LEA provides substantial improvement in terms of the timely computation throughput. Over the six scenarios, LEA increases the static computation strategy by $1.27 \times \sim 6.5 \times$.

\section{Conclusion}
 Motivated by high variability of computing resources in modern distributed computing systems and increasing demand for timely event-driven services with deadline constraints, we consider the problem of dynamic computation load allocation over a coded computing framework. We propose an optimal dynamic computation strategy Lagrange Estimate and Allocate, LEA, which is composed of utilizing the Lagrange coding scheme for data encoding and assigning computation loads based on the estimated state of the network, which is done by estimating the transition probabilities of an underlying Markov model for the system's state from observing the past events at each time step.
In the end, we show that compared to the static computation strategy, LEA increases the timely computation throughput by $1.38 \times \sim 17.5 \times$ in simulations and by $1.27 \times \sim 6.5 \times$ in Amazon EC2 clusters.

At a conceptual level, this paper has some interesting comparisons/connections with \cite{krishnasamy2018augmenting}. Under wireless networks, \cite{krishnasamy2018augmenting} investigates how to turn base stations on or off, in order to adapt to the unknown load arrival and channel statistics. Under cloud computing networks, our paper focuses on how to do the computation load assignment in order to adapt to unknown computing networks. So, at a high-level, the corresponding scheduling problems can be seen as dual of each other: \cite{krishnasamy2018augmenting} assigns base stations to good (on) or bad (off) states in order to meet the demands, while our goal is to assign the computation loads in order to optimally exploit the (unknown) state of the workers. However, we also point out that the setting and objective of the two papers are quite different. We consider cloud computing platforms and focus on the timely computation throughput, which is very different from \cite{krishnasamy2018augmenting}. 
 Another difference is in the proof techniques to show the optimality of the proposed algorithms. The Lyapunov arguments for the adaptive scheme used in \cite{krishnasamy2018augmenting} is quite different from our approach.
 %

\section{Acknowledgment}
This material is based upon work supported by Defense Advanced Research Projects Agency (DARPA) under Contract No. HR001117C0053, ARO award W911NF1810400, NSF grants CCF-1703575, ONR Award No. N00014-16-1-2189, and CCF-1763673. The views, opinions, and/or findings expressed are those of the author(s) and should not be interpreted as representing the official views or policies of the Department of Defense or the U.S. Government. This work is also in part supported by ONR award
N000141612189 and NSF Grants CCF-1703575 and NeTS-1419632
and the UC Office of President under grant No. LFR-18-548175.

\bibliographystyle{ACM-Reference-Format}
\bibliography{references.bib} 
\section*{appendix}
\appendix
\section{Proof of Lemma \ref{lemma:mon}}\label{appendix:proof_mon}
Given an outcome of $\vec{\mu}$, we denote $Y(d,\vec{\mu},\vec{\ell})$ as the total number of results sent back to the master in time $d$ using the load allocation vector $\vec{\ell}$. We define two events $A \triangleq \{\vec{\mu}:Y(d,\vec{\mu},\vec{\ell}) \geq K(\vec{g_1})\}$ and $B  \triangleq \{\vec{\mu}:Y(d,\vec{\mu},\vec{\ell}) \geq K(\vec{g_2})\}$. It is clear that we have $\mathbb{P}(T^{(\vec{\ell},\vec{g}_1)} \leq d) = \mathbb{P}(A)$ and $\mathbb{P}(T^{(\vec{\ell},\vec{g}_2)} \leq d) = \mathbb{P}(B)$. Considering an arbitrary outcome of $\vec{\mu}$ with the fact $K(\vec{g_1}) \leq K(\vec{g_2})$, we have that if $Y(d,\vec{\mu},\vec{\ell}) \geq K(\vec{g_2})$ then $Y(d,\vec{\mu},\vec{\ell}) \geq K(\vec{g_1})$. It implies $B \subseteq A$ which concludes $\mathbb{P}(A) \geq \mathbb{P}(B)$, i.e., $\mathbb{P}(T^{(\vec{\ell},\vec{g_1})}(\vec{\mu}) \leq d) \geq \mathbb{P}(T^{(\vec{\ell},\vec{g_2})}(\vec{\mu}) \leq d)$. 
\section{Proof of Lemma \ref{lemma:twovalue}}\label{appendix:proof_twovalue}
Given a load allocation vector $\vec{\ell}$, we can construct $\vec{\ell^{'}}$ by assigning $\ell^{'}_i = \ell_b$ if $0 \leq \ell_i \leq \ell_b$, and  $\ell^{'}_i = \ell_g$ otherwise.

Given an outcome of $\vec{\mu}$, we denote $Y(d,\vec{\mu},\vec{\ell})$ as total number of results sent back to the master in time $d$ using the load allocation vector $\vec{\ell}$. We define two events $A \triangleq \{\vec{\mu}:Y(d,\vec{\mu},\vec{\ell}) \geq K^*\}$ and $ B \triangleq \{\vec{\mu}:Y(d,\vec{\mu},\vec{\ell^{'}}) \geq K^*\}$. It is clear that we have  $\mathbb{P}(T^{(\vec{\ell},\vec{g^*})}(\vec{\mu}) \leq d)=\mathbb{P}(A)$, and $\mathbb{P}(T^{(\vec{\ell^{'}},\vec{g^*})}(\vec{\mu}) \leq d)=\mathbb{P}(B)$. Considering an arbitrary outcome of $\vec{\mu}$, we have the following facts: (1) If $0 \leq \ell_i \leq \ell_b$, then we have $\frac{\ell^{'}_i}{\mu_i} \leq d$. (2) If $\ell_b < \ell_i \leq \ell_g$, we have either $\frac{\ell_i}{\mu_g},\frac{\ell^{'}_i}{\mu_g} \leq d$ or $\frac{\ell_i}{\mu_b},\frac{\ell^{'}_i}{\mu_b} > d$. (3) $\ell^{'}_i \geq \ell_i$ for all $i$. By the facts above, if $Y(d,\vec{\mu},\vec{\ell}) \geq K^*$, then $Y(d,\vec{\mu},\vec{\ell^{'}}) \geq K^*$ which implies $A \subseteq B$. Thus, we have $\mathbb{P}(T^{(\vec{\ell^{'}},\vec{g^{*}})}(\vec{\mu}) \leq d) \geq \mathbb{P}(T^{(\vec{\ell},\vec{g^*})}(\vec{\mu}) \leq d)$ which completes the proof. 
 \section{Proof of Lemma \ref{lemma:success}}\label{appendix:proof_success}
  In round $m$, we have the optimal success probability: $$\mathbb{P}^{*}(m) = \sum^{|\mathcal{G}^{*}_g(m)|}_{l=a(\mathcal{G}^{*}_g(m))}\sum_{\mathcal{G}:\mathcal{G}\subseteq\mathcal{G}^{*}_g(m),|\mathcal{G}|=l} \prod_{i \in \mathcal{G}} p_{g,i}(m) \prod_{i \in \mathcal{G}^{*}_g(m) \backslash \mathcal{G}} p_{b,i}(m)$$ where $\mathcal{G}^{*}_g(m)$ characterizes the optimal load allocation vector in round $m$. Let's recall that we have $i^*_m$ to determine load allocation vector in round $m$ using LEA, i.e., $\ell_{m,i} = \ell_g \ \text{if} \ 1 \leq i \leq i^*_m$, $\ell_{m,i} = \ell_b$ otherwise. It is clear that this allocation vector is characterized by a set $\hat{\mathcal{G}}(m) = [i^*_m]$. Also, we have $w(i^*_m)=a(\hat{\mathcal{G}}(m))$ where $w(\tilde{i}) \triangleq \lceil \frac{K^* - (n-\tilde{i})\ell_b}{\ell_g} \rceil$. Thus, $\mathbb{P}_{\text{LEA}}(m)$ can be written as follows:
\begin{align*}
    &\mathbb{P}_{\text{LEA}}(m) = 
    \sum^{i^*_m}_{l=w(i^*_m)}\sum_{\mathcal{G}:\mathcal{G}\subseteq [i^*_m],|\mathcal{G}|=l} \prod_{i \in \mathcal{G}} p_{g,i}(m) \prod_{i \in [\tilde{i}] \backslash \mathcal{G}} p_{b,i}(m)\\
    &=\sum^{|\hat{\mathcal{G}}_g(m)|}_{l=a(\hat{\mathcal{G}}_g(m))}\sum_{\mathcal{G}:\mathcal{G}\subseteq \hat{\mathcal{G}}_g(m),|\mathcal{G}|=l} \prod_{i \in \mathcal{G}} p_{g,i}(m) \prod_{i \in \hat{\mathcal{G}}_g(m) \backslash \mathcal{G}} p_{b,i}(m)
\end{align*}
Note that the allocation vector characterized by $\hat{\mathcal{G}}_g(m)$ maximizes the estimated success probability defined in (\ref{eq:est_success1}) and (\ref{eq:est_success2}) which is the estimated success probability based on $\hat{p}_{g,i}(m)$ and $\hat{p}_{b,i}(m)$.

By SLLN, we have that $\hat{p}_{g,i}(m)$ converges to $p_{g,i}(m)$ and $\hat{p}_{b,i}(m)$ converges to $p_{b,i}(m)$ almost surely, as $m$ goes to infinity. For all $\epsilon >0$, there exists $m(\epsilon)$ such that $|p_{g,i}(m) -\hat{p}_{g,i}(m)|< \epsilon$ and $|p_{b,i}(m) - \hat{p}_{b,i}(m)|< \epsilon$ for all $m > m(\epsilon)$. Since $\hat{\mathcal{G}}_g(m)$ maximizes the estimated success probability based on $\hat{p}_{g,i}(m)$ and $\hat{p}_{b,i}(m)$, for all $m > m(\epsilon)$, we have
\begin{align*}
    & \mathbb{P}^{*}(m) \nonumber \\ \leq & \sum^{|\mathcal{G}^{*}_g(m)|}_{l=a(|\mathcal{G}^{*}_g(m)|)}\sum_{\mathcal{G}:\mathcal{G}\subseteq\mathcal{G}^{*}_g(m),|\mathcal{G}|=l} \prod_{i \in \mathcal{G}} (\hat{p}_{g,i}(m)+\epsilon) \prod_{i \in \mathcal{G}^{*}_g(m) \backslash \mathcal{G}} (\hat{p}_{b,i}(m)+\epsilon)\\
    = &\sum^{|\mathcal{G}^{*}_g(m)|}_{l=a(|\mathcal{G}^{*}_g(m)|)}\sum_{\mathcal{G}:\mathcal{G}\subseteq\mathcal{G}^{*}_g(m),|\mathcal{G}|=l} \prod_{i \in \mathcal{G}} \hat{p}_{g,i}(m) \prod_{i \in \mathcal{G}^{*}_g(m) \backslash \mathcal{G}} \hat{p}_{b,i}(m) + f(\epsilon)\\  
    \leq &\sum^{|\hat{\mathcal{G}}_g(m)|}_{l=a(|\hat{\mathcal{G}}_g(m)|)}\sum_{\mathcal{G}:\mathcal{G}\subseteq \hat{\mathcal{G}}_g(m),|\mathcal{G}|=l} \prod_{i \in \mathcal{G}} \hat{p}_{g,i}(m) \prod_{i \in \hat{\mathcal{G}}_g(m) \backslash \mathcal{G}} \hat{p}_{b,i}(m) + f(\epsilon)\\
     \leq & \sum^{|\hat{\mathcal{G}}_g(m)|}_{l=a(|\hat{\mathcal{G}}_g(m)|)}\sum_{\mathcal{G}:\mathcal{G}\subseteq \hat{\mathcal{G}}_g(m),|\mathcal{G}|=l} \prod_{i \in \mathcal{G}} (p_{g,i}(m)+\epsilon) \prod_{i \in \hat{\mathcal{G}}_g(m) \backslash \mathcal{G}} (p_{b,i}(m) + \epsilon)\\
    & + f(\epsilon) =  \mathbb{P}_{\text{LEA}}(m) + g(\epsilon)+f(\epsilon).
\end{align*}
 Note that $h(\epsilon) \triangleq g(\epsilon)+f(\epsilon)$ is a polynomial function of $\epsilon$ and $h(0)=0$, which implies $h(\epsilon) \rightarrow 0$ as $\epsilon \rightarrow 0$. Moreover, it is clear that $\mathbb{P}_{\text{LEA}}(m) \leq \mathbb{P}^{*}(m)$ since $\mathbb{P}^{*}(m)$ is optimal. Therefore, we can conclude that for all $\epsilon_1 > 0$, there exists $m(\epsilon_1)$ such that $|\mathbb{P}_{\text{LEA}}(m)-\mathbb{P}^{*}(m)| < \epsilon_1$ for all $m>m(\epsilon_1)$ which completes the proof.
\end{document}